\newtheorem{theorem}{Theorem}
\newcommand{\EEA}{\end{eqnarray}}
\newcommand{\BEA}{\begin{eqnarray}}
\newcommand{\comment}[1]{}
\journal{mwr}
\title{Correcting biased observation model error in data assimilation}
\authors{Tyrus Berry}
\affiliation{Department of Mathematical Sciences, George Mason University, USA.}
\email{jharlim@psu.edu}
\abstract{While the formulation of most data assimilation schemes assumes an unbiased observation model error, in real applications, model error with nontrivial biases is unavoidable. A practical example is the error in the radiative transfer model (which is used to assimilate satellite measurements) in the presence of clouds.  As a consequence, many (in fact 99\%) of the cloudy observed measurements are not being used although they may contain useful information. This paper presents a novel nonparametric Bayesian scheme which is able to learn the observation model error distribution and correct the bias in incoming observations.  This scheme can be used in tandem with any data assimilation forecasting system. The proposed model error estimator uses nonparametric likelihood functions constructed with data-driven basis functions based on the theory of kernel embeddings of conditional distributions developed in the machine learning community. Numerically, we show positive results with two examples. The first example is designed to produce a bimodality in the observation model error (typical of ``cloudy" observations) by introducing obstructions to the observations which occur randomly in space and time. The second example, which is physically more realistic, is to assimilate cloudy satellite brightness temperature-like quantities, generated from a stochastic cloud model for tropical convection and a simple radiative transfer model.}
\begin{document}

\maketitle

\section{Introduction}

Data assimilation \citep[see e.g.,][]{kalnay:03,mh:12} is a sequential method to estimate the conditional distribution of
hidden state variables $x_i$ given noisy observations $y_i$ through Bayes' formula,
\BEA
p(x_i|y_i) \propto p(x_i) p(y_i|x_i),\label{Bayes}
\EEA
where $p(x_i)$ is the prior distribution of the state variables of interest and $p(y_i|x_i)$ is the likelihood function corresponding to the observation model, $y_i = h(x_i) + \eta_i$ at time-$i$. Here, $h$ denotes an observation function and $\eta_i$ represents the measurement noise. In most data assimilation implementations in Numerical Weather Prediction (NWP), one typically assumes that the observation model $h$ is explicitly known.  Moreover, the tacit assumption is that the noise variables $\eta_i$ are independent Gaussian random variables with mean zero and a specified covariance matrix, $R^o$. With these assumptions, the likelihood function is parametrically defined as, $p(y_i|x_i) = p(y_i-h(x_i)) = p(\eta_i) = \mathcal{N}(0,R^o)$. In NWP applications, the prior density $p(x_i)$ at time-$i$ in \eqref{Bayes} is usually represented by an ensemble of forecast solutions of a Global Circulation Model. 

For satellite data assimilation the observations, $y_i$ can be radiances or brightness temperatures measured by satellite instruments. In this particular example, the typical observation model $h$ is the radiative transfer model \citep{liou:02}. While it is believed that high resolution infrared spectral radiances contain detailed information about the temperature and humidity profile, less than 1\% of the AIRS satellite measurements are being used in operational data assimilation problems due to data processing/thinning for quality control purposes and the presence of clouds \citep{realeetal:08}. Assimilating satellite measurement under cloudy conditions is challenging since the presence of clouds in the atmosphere induces significantly cooler radiances (which can be viewed as large biases) relative to those measured under the clear-sky condition. The main challenge in this problem is to detect observation model error that can occur intermittently due to misspecification of the cloud top height, the number of clouds in a column of atmosphere, and/or the cloud fraction in the radiative transfer model of cloudy measurements \citep{mcnally:09}. Mathematically, this suggests that when an incorrect observation model, $\tilde{h}$, is used in placed of the true observation function, $h$, the observations can be written as,
\BEA
y_i = h(x_i) + \eta_i \approx \tilde{h}(x_i) + b_i + \eta_i,\label{obsmodelerror}
\EEA 
where we introduce a biased model error, $b_i$, in addition to the measurement error $\eta_i$. 

In this paper, we introduce a model error estimator to approximate the distribution of the error $b$ at time-$i$, assuming that the underlying observation function $h$ is unknown. Formally, the model error estimator is a Bayesian nonparametric filter which estimates the time dependent posterior density 
\BEA
p(b|y_i)\propto p(b)p(y_i|b). \label{Bayes2}
\EEA
Here, the prior density, $p(b)$, will be constructed based on the predicted observation error. The likelihood function  $p(y_i|b)$ will be constructed nonparametrically using a historical time series of $y$ and $b$, and no knowledge of the true observation function is assumed. Our construction is based on a machine learning tool known as the \emph{kernel embedding of conditional distributions} formulation introduced in \cite{song2009,song2013}. An additional novelty of our approach is that we generalize the formula in \cite{song2009,song2013} to data-driven Hilbert spaces with basis functions obtained from the diffusion maps algorithm \citep{cl:06,bh:16vb}. This is in contrast to their original approach where they specify the Hilbert space using a specific choice of basis functions, such as the radial basis type \citep{song2013}. Once the prior and likelihood are specified, they are combined with \eqref{Bayes2} to form the posterior density $p(b|y_i)$ of the model error.  Finally, we use the statistics of the posterior, such as the mean and variance, to compensate for the error in the observation function and thereby improve the estimation of the state $x_i$. It is important to stress that this model error estimator can be used as a subroutine in tandem with any data assimilation forecasting system based on \eqref{Bayes}. 

We will demonstrate this idea with two synthetic numerical examples. The first example is with the Lorenz-96 model \citep{lorenz:96}, where the observations are corrupted by severe biases at random instances and locations to mimic the bimodality of the observation error distribution in real applications. The second example is with a stochastic cloud model \citep{kbm:10} for tropical convection, where the observed brightness temperature-like quantities are constructed with a simple radiative transfer model \citep{liou:02} with severe biases in the presence of clouds. 

The remainder of this paper is organized as follows. In Section~2, we describe our framework for estimating the model error estimator, using the ensemble Kalman filter (EnKF) as an archetypal example. In Section~3, we discuss the construction of the nonparametric filter in \eqref{Bayes2} which can be combined with any primary filter \eqref{Bayes}. In particular, we describe how to specify the likelihood function and the prior density in \eqref{Bayes2} by combining historical training data with the current observations and primary filter estimates.  In Section~4 we demonstrate our method on the two numerical examples described above and then we briefly conclude in Section~5.

\section{An observation model error estimator}\label{obsmodelerr}

%Now we discuss a framework for estimating observation model error in the filtering problems.
The key issue, as stated in \eqref{obsmodelerror}, is to overcome the observation model error $b$ when we have no access to the true observation function $h$. Let us outline a general framework to mitigate this issue with Fig.~\ref{diagram}. In this diagram, we refer to the data assimilation system which is used to estimate $x$ at time-$i$ in \eqref{Bayes} as the primary filter. Since different operational NWP centers have their preferential methods (4DVAR, EnKF, hybrid, etc), we will design our approach in such a way that it is applicable to any primary filtering scheme. 

%The main issue that we want to solve is to correct the biased error in the observation model. To correct this error, 
Our strategy is to apply a secondary filter in \eqref{Bayes2} and use the resulting posterior conditional statistics to correct the observation model error in the primary filter. In particular, we will use the posterior mean statistics to correct the biases and the variances to correct the additional uncertainties beyond the measurement error. We should stress that the implementation of the secondary filter in this framework offers no additional changes to the infrastructure in the primary filter except for a simple two-way communication at each assimilation step. Namely, one needs to feed the predicted observation error from the primary filter into the secondary filter and then feed the model error statistics from the secondary filter back into the primary filter (to correct the likelihood functions in the primary filter). The third row in the diagram in Fig.~\ref{diagram} clarifies that we use the current observations to construct the prior density and the likelihood function used in the secondary filter. Now let us illustrate this heuristic discussion with a concrete algorithm in the case where the primary filter is the ensemble Kalman filter (EnKF).

\begin{figure}
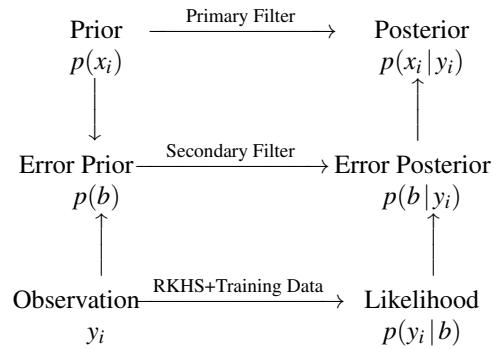

\[ \arraycolsep=1.4pt\begin{array}{clc}
\textup{Prior}  \hspace{0pt}&\xrightarrow{\ \ \ \ \ \textup{Primary Filter} \ \ \ \ \ }& \textup{Posterior} \\
p(x_i) & & p(x_i \, | \, y_i) \\
\hspace{0pt}\left\downarrow\rule{0cm}{.5cm}\right.   \scriptstyle{ }&&\hspace{0pt}\left\uparrow\rule{0cm}{.5cm}\right. \scriptstyle{\textup{ }}\\ 
\hspace{-15pt}\textup{Error Prior} &\hspace{-5pt}\xrightarrow{\ \ \ \ \textup{Secondary Filter} \ \ \ \ }&\hspace{-7pt}\textup{Error Posterior} \\ 
 p(b) & & p(b \, | \, y_i) \\
\hspace{5pt}\left\uparrow\rule{0cm}{.5cm}\right.   \scriptstyle{ }&&\hspace{10pt}\left\uparrow\rule{0cm}{.5cm}\right. \scriptstyle{ }\\
\hspace{-15pt}\textup{Observation} \hspace{0pt}&\hspace{-5pt}\xrightarrow{\ \ \textup{RKHS+Training Data} \ \ }& \textup{Likelihood}  \\
y_i & & p(y_i\, | \, b)
\end{array} \]
\caption{Diagram representing the integration of the secondary filter into an arbitrary primary filtering procedure.}
\label{diagram}
\end{figure}

%\subsection{Using EnKF as the primary filter}  
With the EnKF, both the prior and posterior distributions of the primary filter are assumed to be Gaussian and only the first two empirical moments are corrected using the Kalman filter formulas.  The Kalman filter implicitly assumes that the observation error is unbiased, $\mathbb{E}[y_i-h(x_i)] =\mathbb{E}[\eta_i]  = 0$.  In the $i$-th analysis step, an ensemble of $K$ prior estimates $x^{k,f}_i\sim \mathcal{N}(\bar{x}^f_i,P^f_{xx,i})$, where $k\in \{1,...,K\}$ denotes the ensemble member, is transformed into an ensemble of analysis estimates $x^{k,a}_i \sim \mathcal{N}(\bar{x}^a_i, P^a_{xx,i})$. Here the superscripts $f$ and $a$ are to denote estimates before (prior/forecast) and after (analysis) the observation $y_i$ has been assimilated, respectively. In each analysis step, the EnKF procedure can be summarized in three steps. First, it approximates the prior mean and covariance statistics with empirically estimated statistics defined as,
\BEA
\bar{x}^{f}_i = \frac{1}{K}\sum_{k=1}^K x_i^{k,f}, \quad P_{xx,i}^{f} &=& \frac{1}{K-1}X_iX _i^\top + Q, 
\EEA
where $X_i = [x_i^{1,f} - \bar{x}^f_1,\ldots,x_i^{K,f} - \bar{x}^{f}_i]$. Second, it applies the Kalman filter formula to obtain the posterior mean and covariance,
\BEA
    \bar{x}^{a}_i &=&  \bar{x}^f_i+K_i(y_i- \bar{y}^f_i),  \nonumber\\
    P_{xx,i}^a &=& P_{xx,i}^f - K_i P_{xy,i}^\top, \label{enkf}\\
K_i &=& P_{xy,i}(P_{yy,i}+R)^{-1}.  \nonumber
\EEA
Third, it draws an ensemble of analysis estimates $x^{k,a}_i \sim \mathcal{N}(\bar{x}^a_i, P^a_{xx,i})$. In \eqref{enkf},   
$\bar{y}^f_i = {K}^{-1}\sum_{k=1}^K y^{k,f}_i$, $P_{xy,i} = (K-1)^{-1}X_iY_i^\top$ and $P_{yy,i}= (K-1)^{-1}Y_iY_i^\top$ are the empirically estimated statistics, where $Y_i = [y_i^{1,f} - \bar{y}^f_1,\ldots,y_i^{K,f} - \bar{y}^{f}_i]$, and $y^{k,f}_i = h ( x^{k,f}_i )$ are the ensemble of the predicted observations. Several methods to execute \eqref{enkf} and to draw the analysis ensembles have been introduced, for example, with perturbed observations \citep{evensen:94}, or with square root filters \citep{bishop:01,anderson:01}.  Notice that this method depends on two parameters $Q$ and $R$ which represent the covariance matrices of the dynamical and observation noise.  In particular, when the observation model is perfect we set $R=R^o$ meaning that the only observation noise is the measurement error.  %However, in the presence of observation model error it is common to inflate $R$ above the measurement noise level and similarly inflating $Q$ may compensate for model error in the dynamics.

We assume that the true observation model $h$ is unknown and we are given the imperfect observation model $\tilde{h}$. Define $b_i := h(x_i) - \tilde{h}(x_i)$ as the observation model error at the $i$-th time step and assume that the observation model error is biased, that is, $\mathbb{E}[b_i] = \mu_{b_i} \neq 0$ and independent to the measurement noise, $\eta_i$. With this configuration, it is clear that if we consider filtering with the imperfect observation model in \eqref{obsmodelerror}, the observation error is biased, $\mathbb{E}[y_i-\tilde{h}(x_i)] = \mu_{b_i}\neq 0$.

One way to mitigate this issue is by adjusting the observation model as follows,
\BEA
y_i - \mu_{b_i} = \tilde{h}(x_i) + \tilde{b}_i + \eta_i, \quad\quad \eta_i\sim\mathcal{N}(0,R),\label{unbiasedobsmodel}
\EEA
where we define $\tilde{b}_i := b_i-\mu_{b_i}$ as the unbiased observation model error with covariance matrix $R_{b_i}:= \mathbb{E}[\tilde{b}_i\tilde{b}_i^\top]$. Assuming that $\tilde b_i$ and $\eta_i$ are independent,  we can implement the EnKF with the unbiased observation model in \eqref{unbiasedobsmodel} as follows,
\BEA
\bar{x}^{a}_i &=&  \bar{x}^f_i+\tilde{K}_i(y_i-\mu_{b_i} - \tilde{y}^f_i),  \nonumber\\
\tilde{P}_{xx,i}^a &=& \tilde{P}_{xx,i}^f - \tilde{K}_i \tilde{P}_{xy,i}^\top, \label{unbiasedEnKF}\\
\tilde{K}_i &=& \tilde{P}_{xy,i}(\tilde{P}_{yy,i}+R+R_{b_i})^{-1}.  \nonumber
\EEA
Here, $\tilde{y}^f_i = {K}^{-1}\sum_{k=1}^K \tilde{y}^{k,f}_i$, $\tilde{P}_{xy,i} = (K-1)^{-1}X_i\tilde{Y}_i^\top$, $\tilde{P}_{yy,i}= (K-1)^{-1}\tilde{Y}_i\tilde{Y}_i^\top$, where $\tilde{Y}_i = [\tilde{y}_i^{1,f} - \tilde{y}^f_1,\ldots,\tilde{y}_i^{K,f} - \tilde{y}^{f}_i]$ and $\tilde{y}^{k,f}_i = \tilde{h}\left( x^{k,f}_i \right)$. In order to implement the unbiased filter in \eqref{unbiasedEnKF}, we need to estimate $\mu_{b_i}$ and $R_{b_i}$. We propose to use the conditional statistics of the secondary filter in \eqref{Bayes2}, $\hat{\mu}_{b_i}=\mathbb{E}[b|y_i]$ and $R_{b_i}=Var[b|y_i]$, as the estimators for $\mu_{b_i}$ and $R_{b_i}$, respectively. With this goal in mind, we now explain the secondary filter.

\section{The secondary filter}
The goal of this section is to explain the construction of the nonparametric filtering in \eqref{Bayes2}. Since this requires various technical tools from the machine learning community, we accompany the discussion with several Appendices for detailed discussions. 

We assume that we are given only pairs of data points $\{(x_\ell,y_\ell)\}_{\ell=1}^N$ for training (so no knowledge of the true observation function $h$ is assumed).  From these data points we compute the implied model error $b_\ell=y_\ell-\tilde{h}(x_\ell)$, where $\tilde{h}$ is the given imperfect observation model. First, we will discuss how to train the likelihood function using this dataset. Second, we will discuss how to extend the likelihood function on new observations to obtain the conditional probability of $b$ given new observations $y_i \notin \{y_\ell\}_{\ell=1}^N$ that are not part the training dataset. Third, we will discuss how to construct a prior. Subsequently, we discuss how to extract the estimators $\mu_{b_i}$ and $R_{b_i}$ that are needed in \eqref{unbiasedEnKF}.

\subsection{Training the nonparametric likelihood function}

Our goal here is to learn the likelihood function $p(y|b)$ by using a training data set consisting of pairs $\{(b_\ell,y_\ell)\}_{\ell=1}^N$. The outcome of this training is a matrix of size $N\times N$, where the $(i,j)$-entry is an estimate of the conditional density, $p(y_{i}|b_{_j})$.  This matrix is a discrete representation of the likelihood function evaluated at each point of the training dataset $y_i \in \{y_\ell\}_{\ell=1}^N$ and $b_j \in \{b_\ell\}_{\ell=1}^N$. This matrix is a nonparametric representation of $p(y|b)$.  However, since the data set could be quite large, we will never explicitly construct this matrix, and instead it will be represented by its projection into a lower-dimensional set of basis elements.

For clarity of presentation, we assume in this section that the observation is one dimensional. In order to correct a multidimensional observation, the following bias correction steps are applied independently to each coordinate of the observation. With this scalar formulation, the computational cost in constructing the likelihood function becomes manageable and the secondary filter is easily parallelize-able for high-dimensional observations.

Our main idea is to represent the conditional density $p(y|b)$ with a set of basis functions which can be learned from the training data set using the diffusion maps algorithm \citep{cl:06,bh:16vb}. First let us discuss how to construct the basis functions. Subsequently, we discuss a nonparametric representation of conditional density functions.

\subsubsection{Learning the data-driven basis functions}
In a nutshell, the diffusion maps algorithm can be described as follows. Given a dataset $x_\ell\in\mathcal{M}\subset\mathbb{R}^n$ with sampling density $q(x)$, defined with respect to the volume form inherited by the manifold $\mathcal{M}$ from the ambient space $\mathbb{R}^n$, the diffusion maps algorithm is a kernel-based method which constructs an $N\times N$ matrix $L$ that approximates a weighted Laplacian operator, $\mathcal{L} = \nabla \log(q)\cdot\nabla +\Delta$. The eigenvectors $\vec{\varphi}_k$ of the matrix $L$ are discrete estimates of the eigenfunctions $\varphi_k(x)$ of the operator $\mathcal{L}$ which form an orthonormal basis of a weighted Hilbert space $L^2(\mathcal{M},q)$. 

For example, if the data is uniformly distributed, $q(x)=1$, on the unit circle $\mathcal{M}=S^1$, then the matrix $L$ approximates the Laplacian $\Delta$ on this periodic domain. In this case, eigenvectors of $L$ approximates eigenfunctions of the Laplacian operator on the unit circle, which are the Fourier functions that form an orthonormal basis of $L^2(S^1)$. Thus one can think of the diffusion maps algorithm as a method to specify a generalized Fourier basis adapted from the data. The basis functions $\varphi_k(x)$ are represented nonparametrically by the vectors $\vec{\varphi}_k\in\mathbb{R}^N$ whose $\ell$-th component is a discrete estimate of the eigenfunction $\phi_k(x_\ell)$, evaluated at training data point $x_\ell$.

In our application, we apply the diffusion maps separately on the dataset $b_\ell \in\mathbb{R}$ and $y_\ell \in\mathbb{R}$. Let $q(b)$ and $\tilde{q}(y)$ be the sampling densities of $b_\ell$ and $y_\ell$, respectively. Implementing the diffusion maps algorithm, we obtain vectors $\vec{\varphi}_k$ which approximate $\varphi_k(b)\in L^2(\mathbb{R},q)$ and $\vec{\phi}_k$ which approximate $\phi_k(y) \in L^2(\mathbb{R},\tilde{q})$. In our implementation, we use the variable bandwidth kernels introduced by \cite{bh:16vb}. We refer to the Appendix in \citet{bh:16physd} for the pseudo-code of the algorithm. %To simplify the discussion in the rest of this section, we denote the Hilbert spaces $\mathcal{H}:=L^2(\mathbb{R},q)$ and $\mathcal{\tilde H}:=L^2(\mathbb{R},\tilde{q})$.

\subsubsection{A nonparametric representation of conditional density functions}

Let $\varphi_j(b)\in L^2(\mathbb{R},q)$ and $\phi_k(y) \in L^2(\mathbb{R},\tilde{q})$ be the basis functions approximated by the diffusion maps algorithm. For finite modes, $j=1,\ldots, M_2, k=1,\ldots, M_1$, we consider a nonparametric representation of the conditional density as follows,
\BEA
p(y|b) = \sum_{k=1}^{M_1} \mu_{Y|b,k} \phi_k(y) \tilde q(y).  \label{pycondx}
\EEA
where the expansion coefficients are defined as follows,
\BEA
 \mu_{Y|b,k} = \sum_{j=1}^{M_2}  \varphi_j(b) [C_{YB}C_{BB}^{-1}]_{kj}.\label{muyb}
\EEA
Here matrices $C_{YB}$ is $M_2 \times M_1$  and $C_{BB}$  is $M_1\times M_1$  whose components can be approximated by Monte-Carlo averages as follows,
\BEA
\big[C_{YB}\big]_{jk} &\approx&  \frac{1}{N} \sum_{\ell=1}^N \phi_j(y_\ell) \varphi_k(b_\ell) \label{CYB}\\
\big[C_{BB}\big]_{jk} &\approx& \frac{1}{N} \sum_{\ell=1}^N \varphi_j(b_\ell) \varphi_k(b_\ell)\label{CBB}.
\EEA
The equation for the expansion coefficients in \eqref{muyb} is based on the theory of kernel embedding of conditional distribution introduced in \cite{song2009,song2013} which we reviewed in Appendix A below. See Appendix B for the detailed proof of equations \eqref{muyb}-\eqref{CBB}.  

From the expression in \eqref{muyb}, one can see that the conditional density in \eqref{pycondx} is represented as a regression in infinite dimensional spaces with basis functions of $\varphi_j(b)$ and $\phi_k(y)$.
 This representation is nonparametric in the sense that we do not assume that the density function is of particular distribution. Numerically, the nonparametric representation of $p(y|b)$ is given by an $N\times N$ matrix whose $(i,j)$th component is 
\BEA
p(y_i|b_j) = \sum_{k=1}^{M_1} \mu_{Y|b_j,k} \phi_k(y_i) \tilde q(y_i),\label{matrixpb|y}
\EEA
where $y_i \in \{y_\ell\}_{\ell=1}^N$ and the coefficients are given by \eqref{muyb} evaluated at the training data $b_j \in \{b_\ell\}_{\ell=1}^N$. From \eqref{matrixpb|y}, notice that all we need are the function values $\phi_k(y_\ell)$, $\varphi_j(b_\ell)$, $\tilde{q}(y_\ell)$, which are obtained via the diffusion maps algorithm. We should note that the sampling density $\tilde{q}$ is estimated using a kernel density estimation method in our implementation of the diffusion maps algorithm (see Appendix of \cite{bh:16physd} for the algorithmic detail). In our numerical implementation, we will set $M_1=M_2=M$. 

\subsection{Extension of the likelihood function to new observations}

To construct $p(y_i\,|\,b_\ell)$ for $y_i \neq\{y_\ell\}_{\ell=1}^N$ that is not in the training data set, Eq.~\eqref{pycondx} suggests that we need to extend the eigenfunctions to this new data point, $\phi_k(y_i)$. One approach would be to use the Nystr\"{o}m extension \citep{nystrom:30}, however, since the observation $y_i$ is noisy, a more robust method is to compute weights
\begin{align} p(y_\ell \, | \, y_i) = \frac{1}{\sqrt{2\pi R}}\exp\left(-\frac{(y_\ell - y_i)^2}{2R}\right)\nonumber \end{align}
which are the probabilities of the training data points given the observation $y_i$.  We then estimate 
\begin{align} \mathbb{E}_{\eta_i}[\phi_k(y_i)\tilde q(y_i)] &= \mathbb{E}_{p(\cdot \, | \, y_i)}[\phi_k(\cdot)\tilde q(\cdot)] \nonumber \\
 &\approx \frac{1}{N}\sum_{\ell=1}^N p(y_\ell \, | \, y_i)\phi_k(y_\ell)\tilde q(y_\ell)\nonumber \end{align}
and use this expectation on \eqref{pycondx} to define the likelihood function, 
\begin{align}\label{biasLikelihood}
p(y_i \, | \, b_\ell)  &= \mathbb{E}_{\eta_i}[ p(y_\ell \, | \, b_\ell)]  \nonumber \\ &=  \sum_{k=1}^M  \mu_{Y| b_\ell,k}\mathbb{E}_{\eta_i}[ \tilde \varphi_k(y_i) \tilde q(y_i)],
\end{align}
where the coefficients $\mu_{Y|b_\ell,k}$ are defined in \eqref{muyb} and computed in the training phase. To conclude, we represent the likelihood function $p(y_i|b)$ nonparametrically by an $N$-dimensional vector whose $\ell$th component is $p(y_i \, | \, b_\ell)$ as prescribed in \eqref{biasLikelihood}. In the remainder of this paper, we denote this nonparametric likelihood function as the \emph{RKHS likelihood function}.

\subsection{Prior density functions}

An appropriate construction of the prior distribution $p(b)$ at each data assimilation time $i$ is essential for accurate filter estimates.  This is especially true when the likelihood function $p(y_i|b)$ is bimodal as a function of $b$, as we will see in the numerical applications below. In this article, we consider a Gaussian prior distribution with mean and variance given by,
\BEA
\hat b_i = y_i - \bar{y}_i^f,\quad\quad  \sigma_{b_i}^2 = P_{yy,i} + R.\label{sigmab} 
\EEA 
Given a training data set $\{(x_\ell,y_\ell)\}_{\ell=1}^N$, we can compute the training biases $b_\ell = y_\ell - \tilde h(x_\ell)$ such that the prior density evaluated at the training data is given by,
\BEA
\label{biasPrior} p(b_\ell) \propto \exp\left(-\frac{(b_\ell-\hat b_i)^2}{2\sigma_{b_i}^2}\right).
\EEA
So, $p(b)$ is represented by an $N$-dimensional vector whose $\ell$th component is $p(b_\ell)$ as defined in \eqref{biasPrior}. As an alternative to the time dependent variance defined in \eqref{sigmab}, one can also use the climatological variance obtained from the historical training $b_\ell$ (this is analogous to a 3DVAR prior). While more complicated priors are always possible, we will apply our numerical experiments below using these Gaussian prior densities.

\subsection{Statistics of the posterior distribution} Now that we have defined the likelihood $p(y_i \, | \, b_\ell)$ in \eqref{biasLikelihood} and the prior $p(b_\ell)$ in \eqref{biasPrior}, we can multiply these terms to form the posterior density,
\begin{equation}
p(b_\ell \, | \, y_i) = \frac{1}{Z}  p(b_\ell)p(y_i \, | \, b_\ell), \nonumber
\end{equation}
where 
\[ Z = \int p(b)p(y_i \, | \, b) \, db \approx \frac{1}{N}\sum_{\ell=1}^N p(b_\ell)p(y_i \, | \, b_\ell)q(b_\ell)^{-1} \] 
is the normalization constant estimated via Monte-Carlo summation. With this posterior density estimate, we can compute the posterior mean and variance,
\begin{align}\label{postStats}
\hat{\mu}_{b_i} &= \frac{1}{N} \sum_{\ell=1}^N b_\ell p(b_\ell \, | \, y_i)q(b_\ell)^{-1} \nonumber \\ \hat{R}_{b_i} &= \frac{1}{N} \sum_{\ell=1}^N (b_\ell-\tilde b_i)^2 p(b_\ell \, | \, y_i)q(b_\ell)^{-1}.
\end{align}
as estimators for $\mu_{b_i}$ and $R_{b_i}$ that can be used in the primary filtering step in \eqref{unbiasedEnKF}. This completes our description of the secondary filter. In the remainder of this paper, we denote the EnKF in \eqref{unbiasedEnKF} with observation error corrected using the statistics in \eqref{postStats} as the \emph{RKHS filter}.

\section{Example 1: Assimilating random ``cloudy'' observations} 

\begin{figure}
\centering
\includegraphics[width=0.48\linewidth]{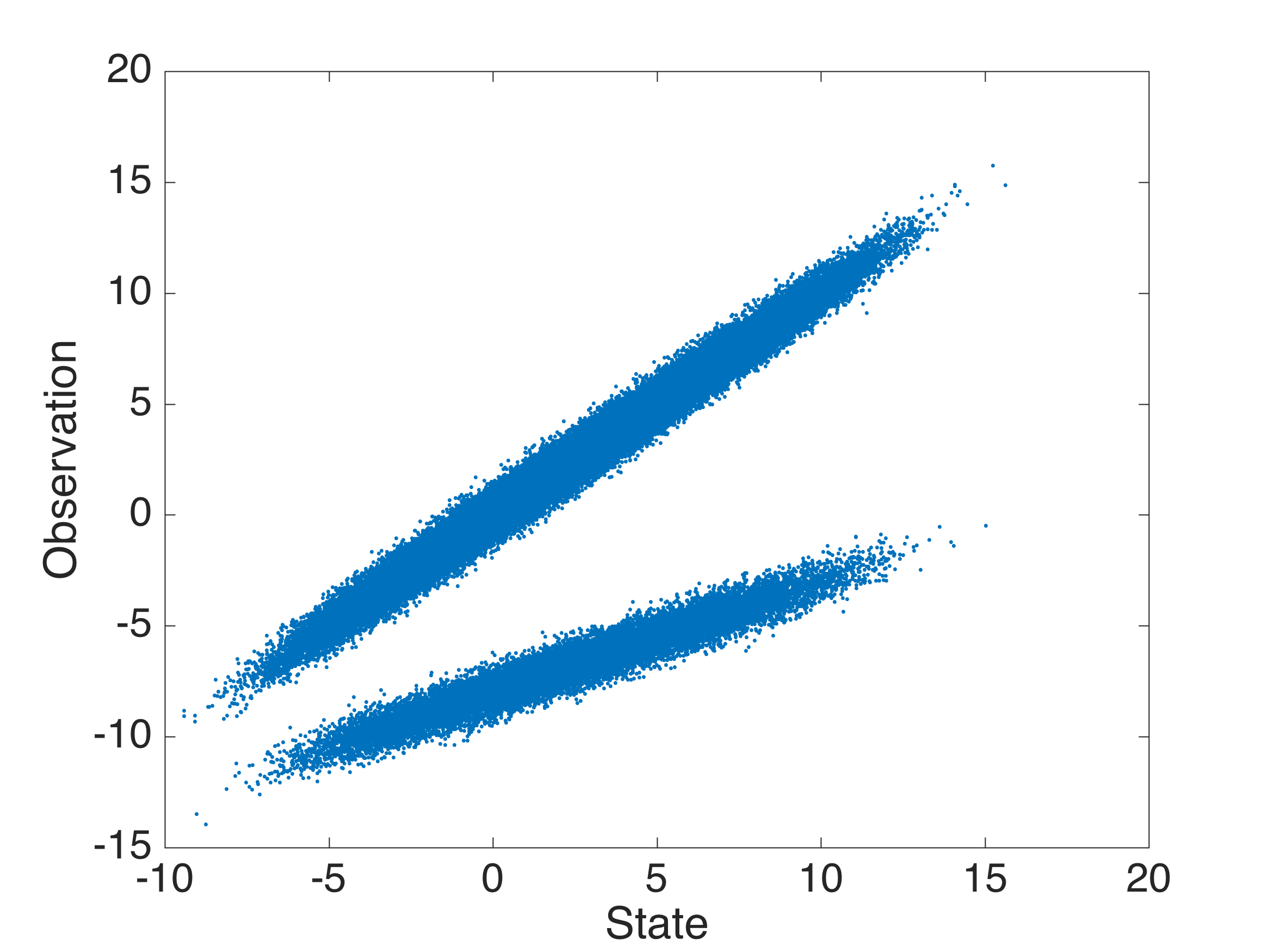}
\includegraphics[width=0.48\linewidth]{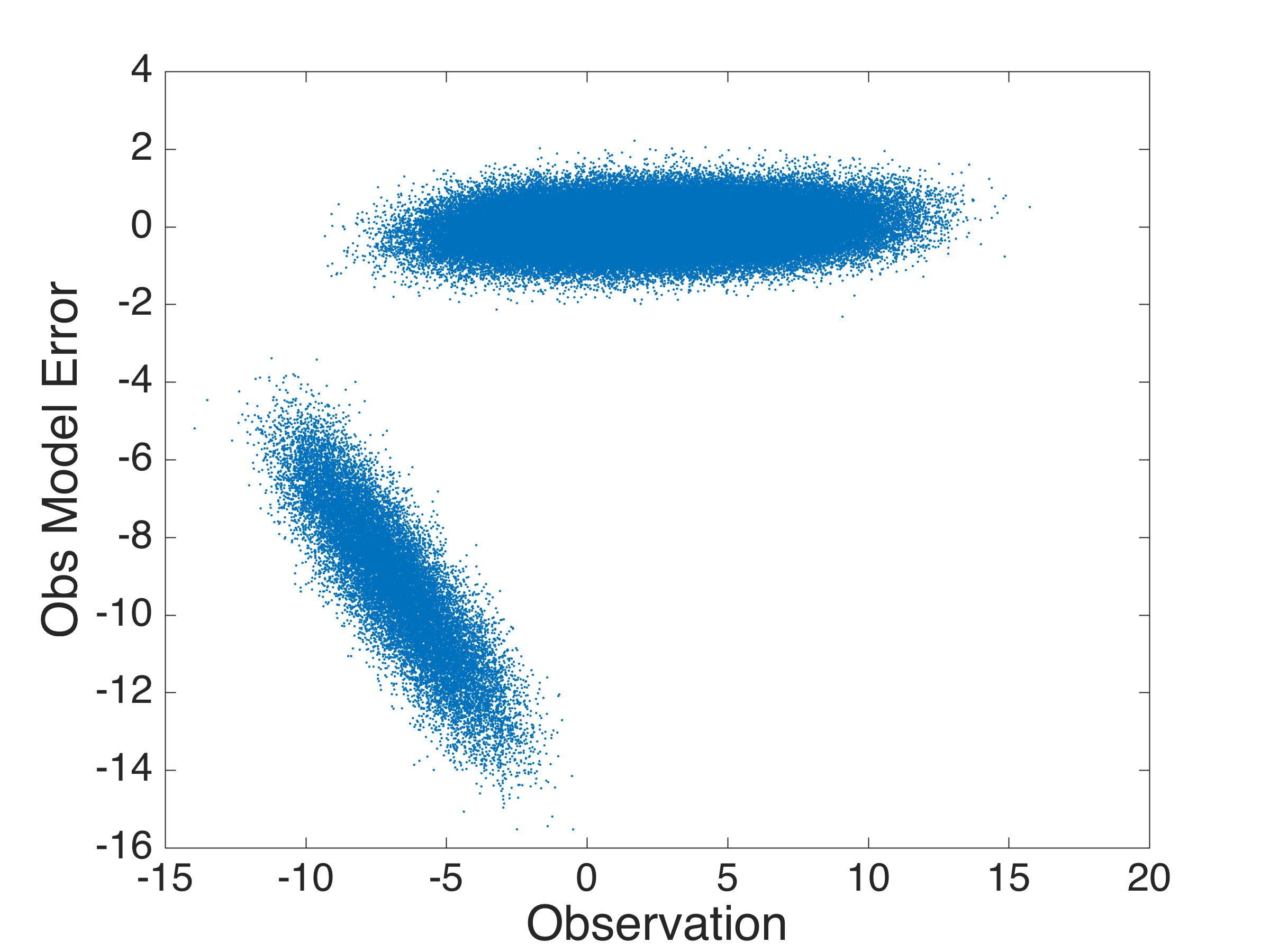}
\includegraphics[width=0.48\linewidth]{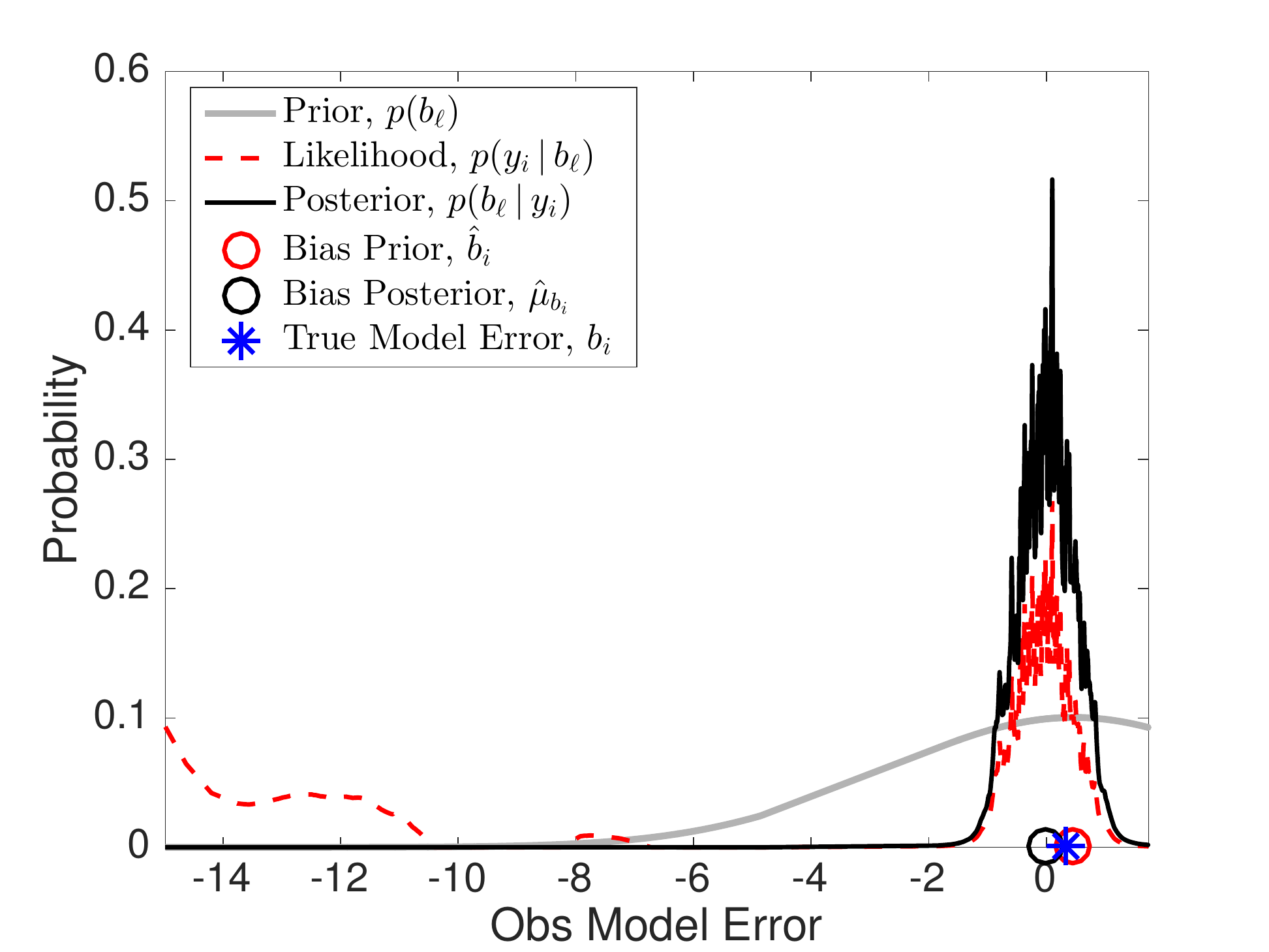}
\includegraphics[width=0.48\linewidth]{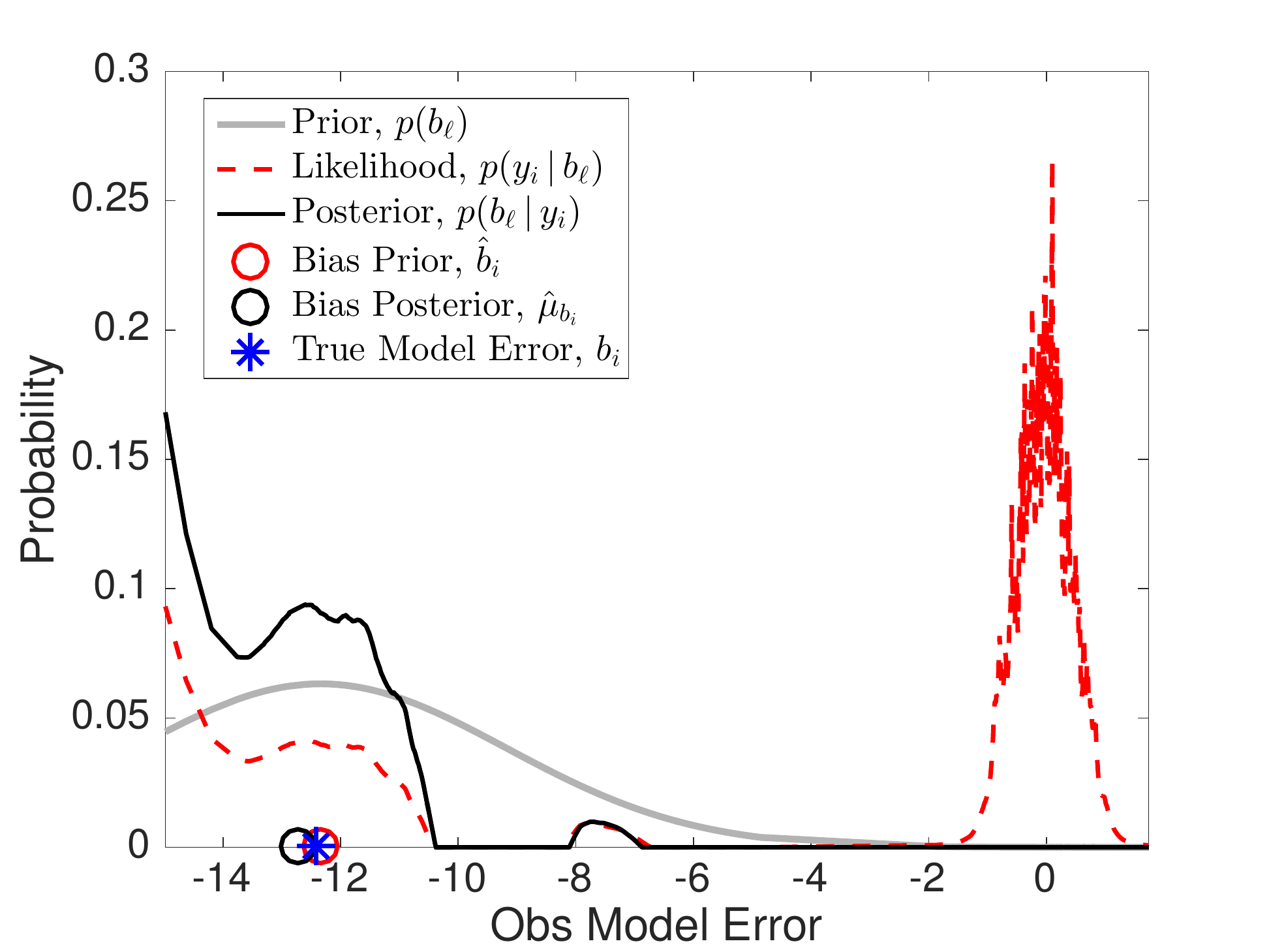}
\caption{\label{obsvsstate} Top, Left: Visualization of the observations using \eqref{cloudyObs} plotted against the true state, the top ellipse corresponds to the clear-sky observations and the bottom ellipse corresponds to the cloudy observations. Top, Right: Visualization of the observation model error as a function of the observation.  Bottom, Left: Example of a bimodal RKHS likelihood function (red, dashed) indicating the current observations maybe obstructed or not obstructed, the prior \eqref{biasPrior} derived from the primary filter (gray, solid) reveals it to be unobstructed as shown in the posterior (black, solid).  Bottom, Right: Similar bimodal likelihood but the prior indicates obstructed observations.}
\end{figure}

In this example we will introduce a severe error into the observations of the 40-dimensional chaotic \citet{lorenz:96} system,
\begin{eqnarray}\label{l96}
\dot x_j = x_{j-1}(x_{j+1}-x_{j-2}) - x_j + 8, 
\end{eqnarray}
where the indices $j$ are taken modulo 40. The underlying truth is generated by the RK4 integration scheme with integration time step $\delta t=0.05$ and a randomly chosen initial condition. Direct observations are taken at every discrete time step $\Delta t=t_{i+1}-t_i$, which we will vary between 0.1-0.5. We will show results for observing $x_j$ at every other grid points (20 observations). At each observation time step $t_i$, we randomly choose up to $c=7$ from the 20 observed locations and obstruct the corresponding observations with a ``cloud'' as follows. We draw $\xi_i\sim \mathcal{U}(0,1)$ and let the observations at these $c$ locations be,
\begin{align}\label{cloudyObs} 
h(x_k) &= \left\{ \begin{array}{ll} x_k  & \xi_i > 0.8  \\ \beta_k x_k - 8 & \textup{ else } \end{array} \right.\\
\beta_k &\sim \mathcal{N}(0.5,1/50). \nonumber
\end{align}
Effectively, up-to seven random locations out of 20 direct measurement have an 80\% chance of being randomly scaled by $\beta_k$ and shifted down by eight units. In our experiment, the observed data is generated with this hidden observation model and the filter observation model is $\tilde{h}(x_k)=x_k$. Therefore, when $\xi_i>0.8$, we directly observe the state and the observation model is correct and we refer to this case as an \emph{unobstructed} (``clear-sky") observation. On the other hand, when $\xi_i\leq 0.8$, the filter observation model is incorrect, $\tilde{h}\neq h$, and we refer to this case as an \emph{obstructed} (``cloudy") observation. Notice that at each time there are up to $7$ randomly chosen ``cloudy" locations. To represent the measurement error, we also include additive Gaussian noise $\eta_k \sim \mathcal{N}(0,R^o)$ in the observations $y_k = h(x_k) + \eta_k$.  Since the obstructions in the observation $h$ appear randomly, they are impossible to predict, and we assume that the modeler only has the incorrect observation model $\tilde h(x_k) = x_k$.  For comparison, we also generate a set of `clear sky' observations $\tilde y_k = \tilde h(x_k) + \eta_k = x_k + \eta_k$ so that we can test the standard EnKF in the case when there is no model error.

In Fig.~\ref{obsvsstate} we show a scatter plot which clearly illustrates the bimodal distribution of the observations, $y_k$.  Notice that the clear-sky and cloudy observations are both permitted in the range $[-9,-2]$ making it impossible to tell from the observation alone whether the observation is obstructed or not. The conditional density, $p(y_i\,|\,b_\ell)$ is trained from a short training data set $\{(b_\ell,y_\ell)\}_{\ell=1}^N$, where $N=10000$. In this example since model error is spatially homogeneous, we fit a single model and use it for each of the $20$ observations. This assumption also allows us to use only $500$ time steps of training data (since each time step contains $20$ observations, this short time series gives us $N=10000$ observations). In this numerical experiment, the likelihood function is constructed using $M=250$ eigenfunctions. 

In the bottom panels of Fig.~\ref{obsvsstate}, we show examples of the RKHS likelihood functions (red, bottom left and right) both of which result from observations near $y_i \approx 4$, but in one case the observation is obstructed (bottom left) and in the other the observation is unobstructed (bottom right). In these panels, we also show the Gaussian prior distribution as proposed in \eqref{biasPrior} with a time-independent $\sigma_b^2$ obtained by taking the variance of the training data $b_\ell$. Notice that the prior helps the secondary filter to identify whether the observation error is small (bottom, left panel) or large (bottom,right). In this case, the climatological prior variance is sufficient to give a relatively informative prior. In the next example, we will need a time-dependent prior since the variance is too broad. 

In Fig.~\ref{timetrace}, we show the spatiotemporal structure of the observations, the true state, and the EnKF without and with the observation model error corrections. Notice that the observations are severely corrupted (see the deep blue dots) and these obstructions occur completely randomly in space and time. Note also that the ``cloudy'' observations cause observation model error because we assume no knowledge of the true observation function $h$, and we filter using $\tilde{h}(x_k) = x_k$ on each observed location. Here, the primary filter is implemented with 80 ensemble members and an adaptive covariance estimation of the system and observation noise covariances, $Q$ and $R$ (see \cite{bs:13} for details).  Below, we will also include results without using the adaptive covariance estimation for comparison.  We use the abbreviation EnKF to refer to the primary filter without correcting the observation model error, that is, $\mu_{b_i}=0$ and $R_{b_i}=0$ in \eqref{unbiasedEnKF} and we use the abbreviation RKHS to refer to the EnKF with the observation model error corrected by \eqref{postStats}.

In Fig.~\ref{RandDT}, we show the Root-Mean-Square Error (RMSE) between the truth and the filtered estimates as functions of measurement error, $\sqrt{R^o}$, and observation time, $\Delta t$.  Each RMS error is averaged over $5000$ filter steps after removing an initial transient to allow each filter to converge.  In each panel, we include the results of applying the EnKF to unobstructed (clear sky) observations $\tilde y_k$ (black, solid line) which should be considered the best possible filtering since the observation model is correct when the observations are all unobstructed.  On the other hand, when the observations are obstructed, $y_k$ (cloudy sky) the observation model is incorrect and the results (red, dotted line) degrade significantly (notice that gaps in the curve indicate catastrophic filter divergence). 

The results shown in the top row of Fig.~\ref{RandDT} correspond to implementing the filter using the true value $R=R^o$ and a fixed small additive inflation $Q=10^{-3}\times I_{20\times 20}$. In this case, the classical EnKF filter estimates diverge in the presence of this severe observation model error. This result suggests that the empirical additive inflation (which is commonly used to compensate for model error in applications) is not able to overcome the ill-posedness induced by the bimodality of the model error distribution in this example. In the bottom row, we include results which use $Q$ and $R$ parameters determined by the adaptive covariance estimation method \citep{bs:13}. Notice that the EnKF applied to the cloudy observations with the adaptive estimation of $Q$ and $R$ is able to prevent catastrophic filter divergence except when the observation time $\Delta t>0.3$. For both the fixed parameter and adaptive filters, it is clear that the RKHS filter given the cloudy observations approaches the performance of the EnKF given the clear sky observations in the limit of small measurement error and small observation time.

In Fig.~\ref{EstimatedQR} we plot the average of the diagonal entries of the $Q$ and $R$ parameters estimated by the adaptive covariance estimation as functions of both the measurement error covariance and the observation time. First, the EnKF with clear-sky observations recovers the true values of $R$ with high accuracy when the observation time is relatively small ($\Delta t<0.35$). Notice that when the EnKF is given the cloudy observations, the estimates for $Q$ and $R$ are much larger than the estimates than when given the clear-sky observations. Moreover, the estimates from the RKHS filter are closer to those of the EnKF given the clear-sky observations. Intuitively, this means that the RKHS model error correction is effective and that the adaptive method only needs to compensate for the remaining measurement error which is present even in the clear sky observations.  In other words, the RKHS correction is effectively removing the clouds from the cloudy observations and achieving results close to the EnKF given the clear sky observations.

%In Fig.~\ref{RandDT}, we show the RMS difference between the truth and the analysis mean estimates as functions of measurement error, $\sqrt{R}$, and observation time, $\Delta t$. Notice that when observation time step and measurement error are small, RKHS produces accurate estimates closer to the EnKF with perfect observation model. In Fig.~\ref{timetrace}, we show time series at one spatial location for fully observed system with up to $c=5$ possible cloudy observations (left) and when every other state variable is observed with up to $c=7$ possible cloudy observations. Here, we also show the filter estimates from the EnKF with the perfect observation model in \ref{cloudyObs} for diagnostic purpose. Notice that the filter estimates from RKHS is as accurate as those from the EnKF with the perfect observation model, whereas the estimates from the EnKF with the imperfect observation model are not accurate at all. 

\begin{figure}
\centering
\includegraphics[width=0.95\linewidth]{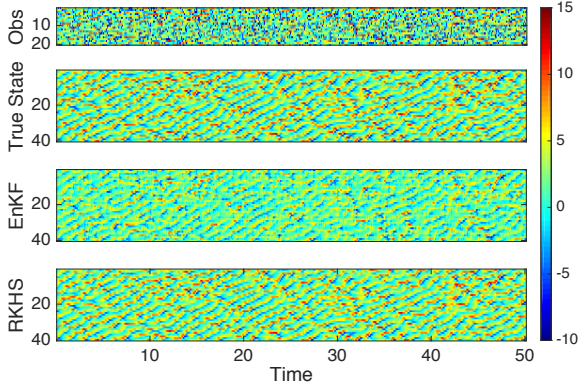}
\caption{\label{timetrace} The spatiotemporal patterns of the sparse observations (top) simulated using \eqref{cloudyObs} with $c=7$ randomly located obstructed observation, compared with the true state (second row) and the filtered state estimates of the EnKF with adaptive tuning of Q and R (third row) and the RKHS corrected EnKF (bottom).}
\end{figure}

\begin{figure}
\vspace{-10pt}
\centering
\includegraphics[width=0.48\linewidth]{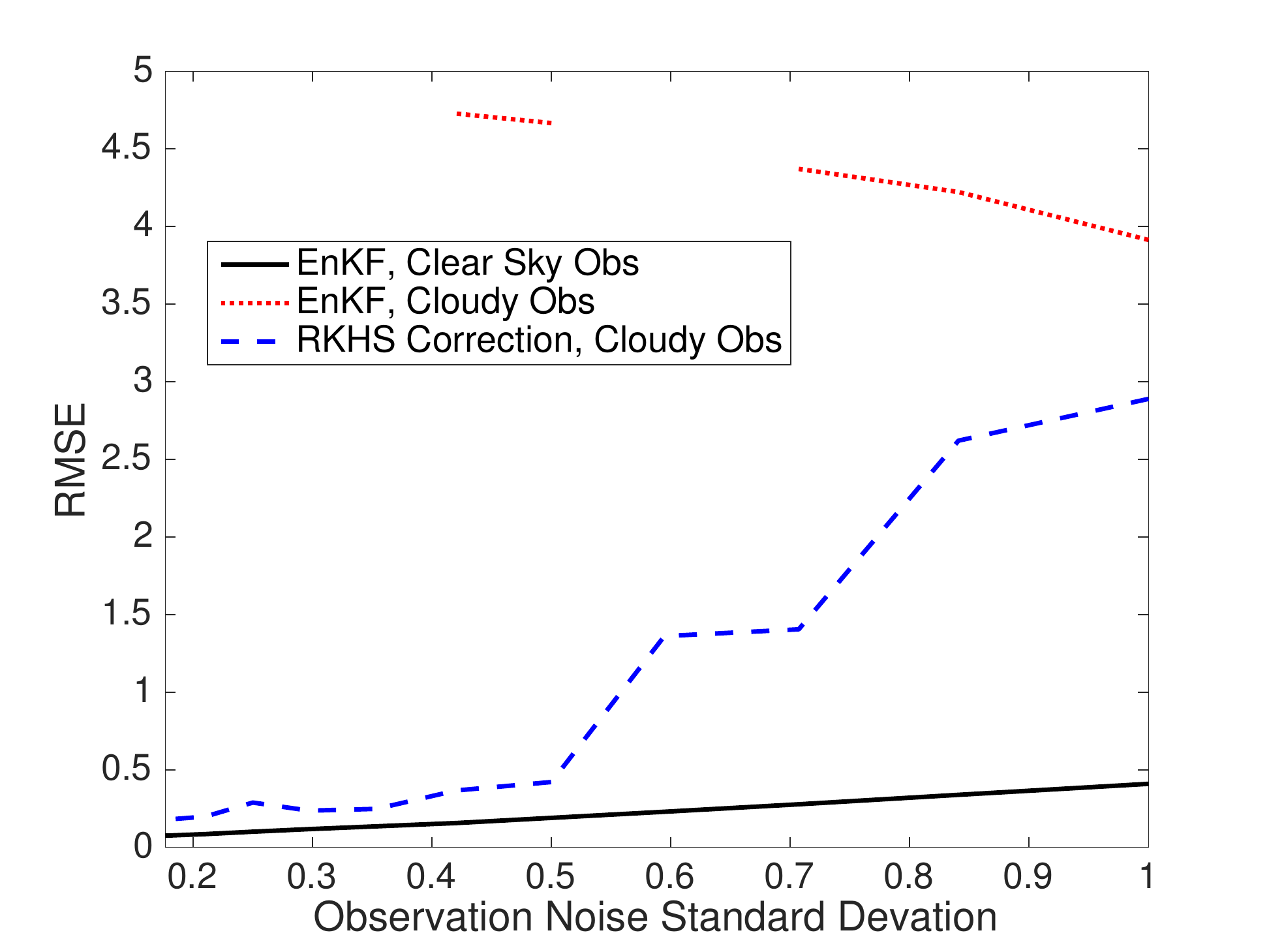}
\includegraphics[width=0.48\linewidth]{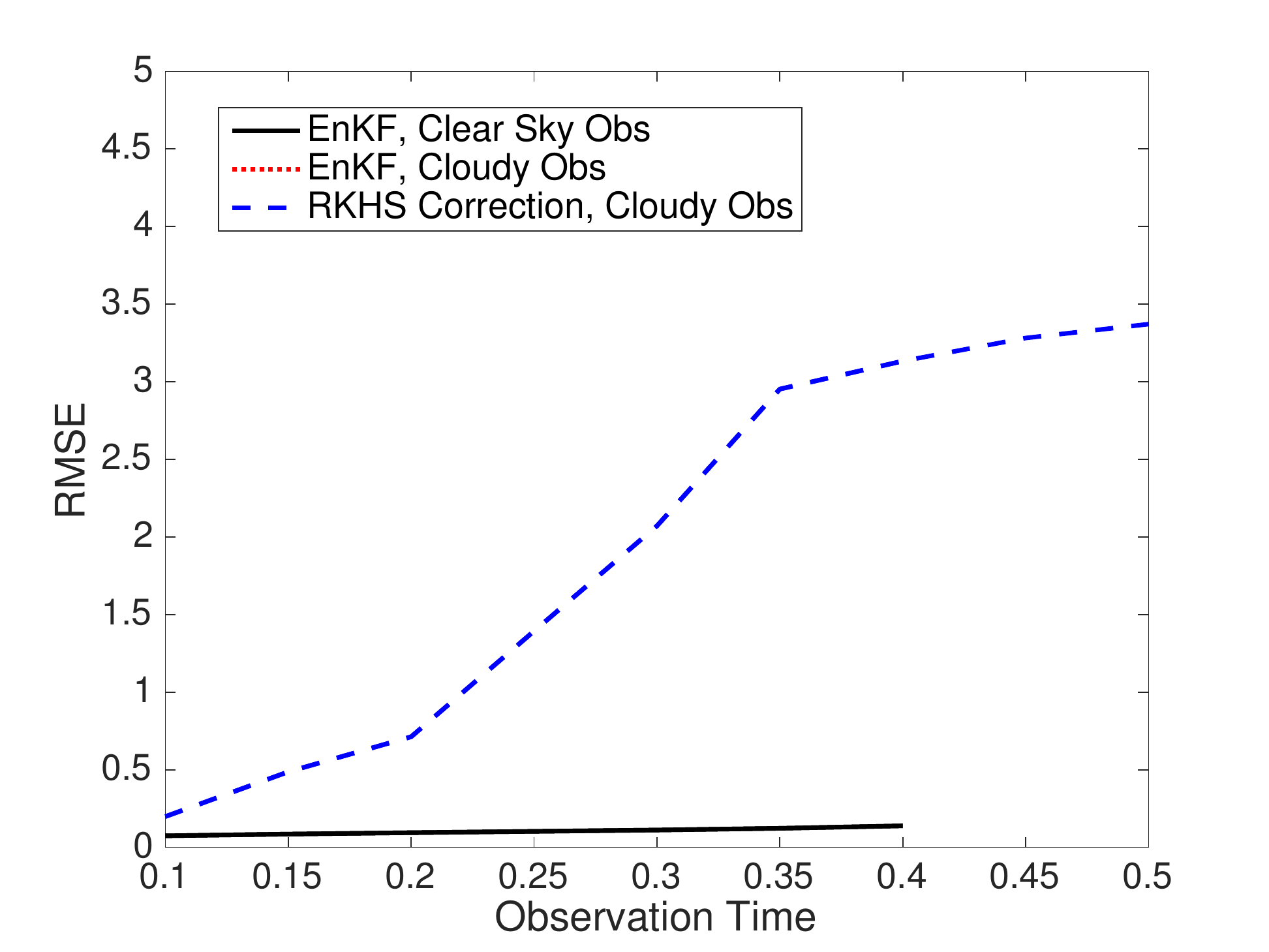}
\includegraphics[width=0.48\linewidth]{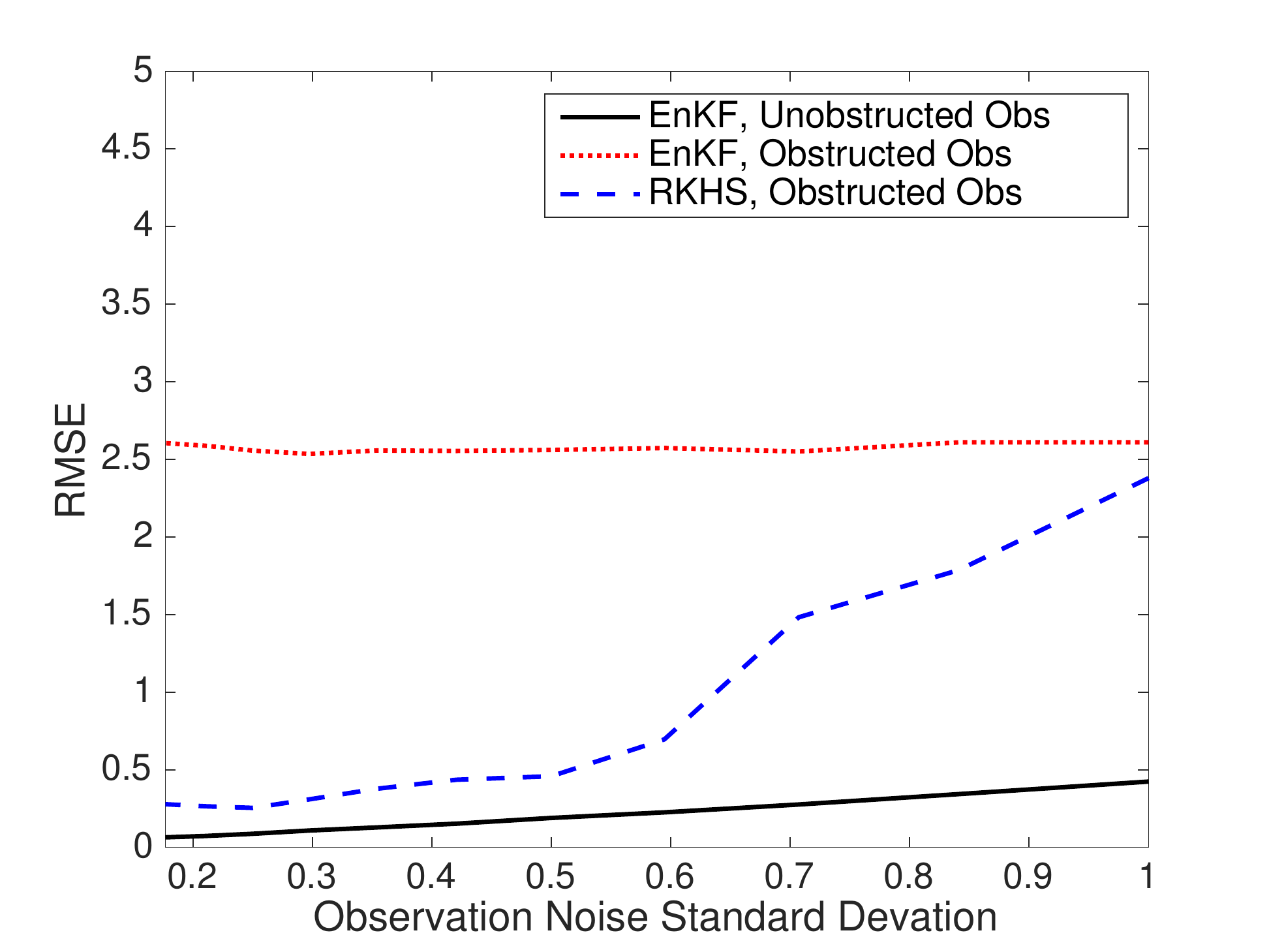}
\includegraphics[width=0.48\linewidth]{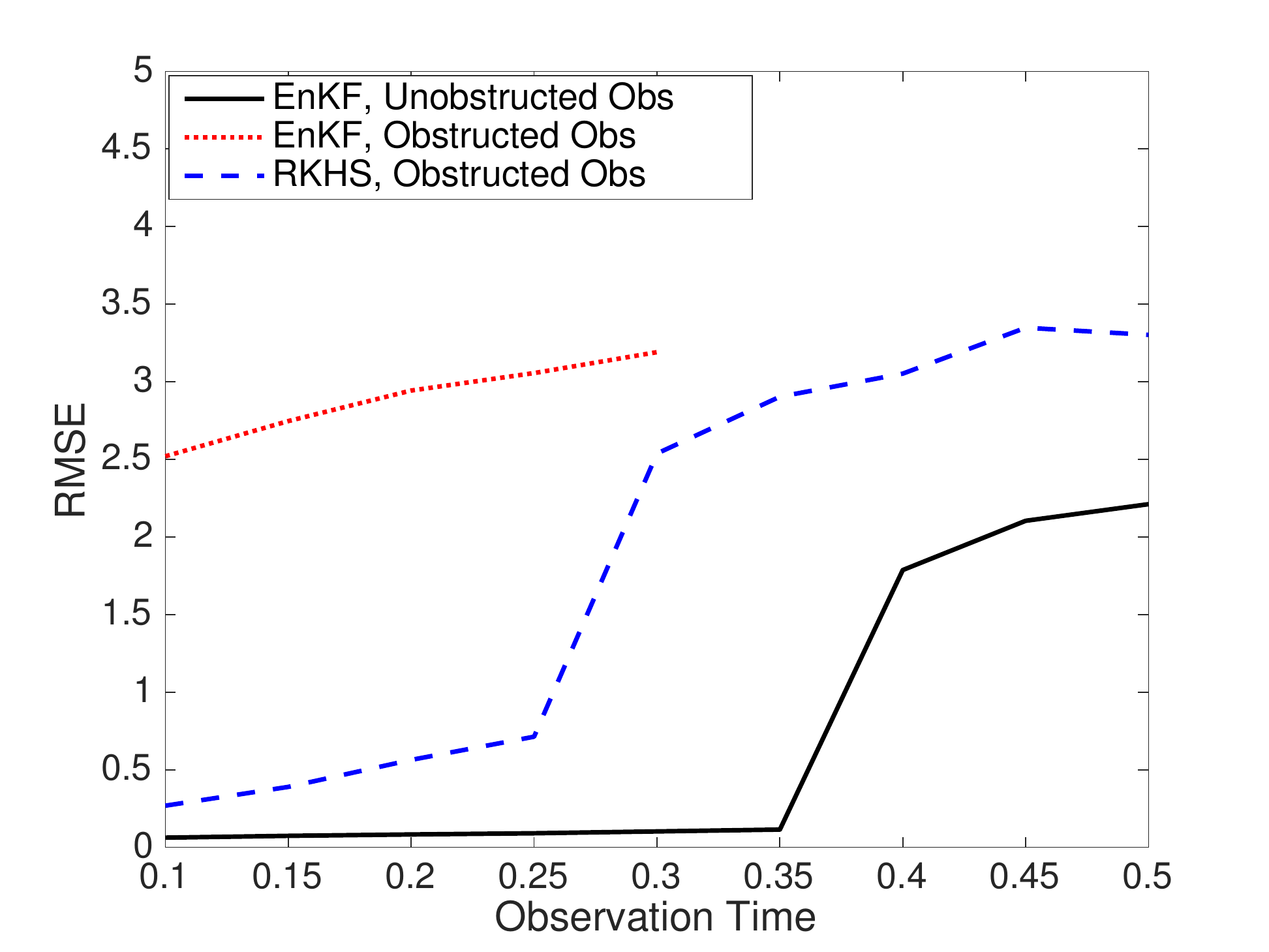}
\caption{\label{RandDT} Filtering improvement for cloudy observation using the RKHS bias correction (blue, dashed) compared to the standard EnKF with no model error correction (red, dotted) and compared to the EnKF which uses the perfect ``clear-sky" observations (black, solid). Top, Left: RMSE as a function of the measurement error standard deviation, $\sqrt{R^o}$ with a fixed observation time $\Delta t=0.1$.  Top, Right: RMSE as a function of observation time when $R^o=2^{-5}$. All filters in the top row used an $R$ parameter equal to the true observation noise covariance $R^o$ and fixing $Q=10^{-3}I$ (a simple additive inflation).  Bottom: Same as top row except that $Q$ and $R$ are estimated adaptively.  Notice that without the adaptive estimation scheme (top) the standard EnKF is often catastrophically divergent when filtering the obstructed observations.  The adaptive estimation scheme (bottom) improves the stability and performance of all the filters, but cannot overcome the obstructed observations in the EnKF (red, dotted) without the RKHS correction (blue, dashed).}
\end{figure}

\begin{figure}
\vspace{-10pt}
\centering
\includegraphics[width=0.48\linewidth]{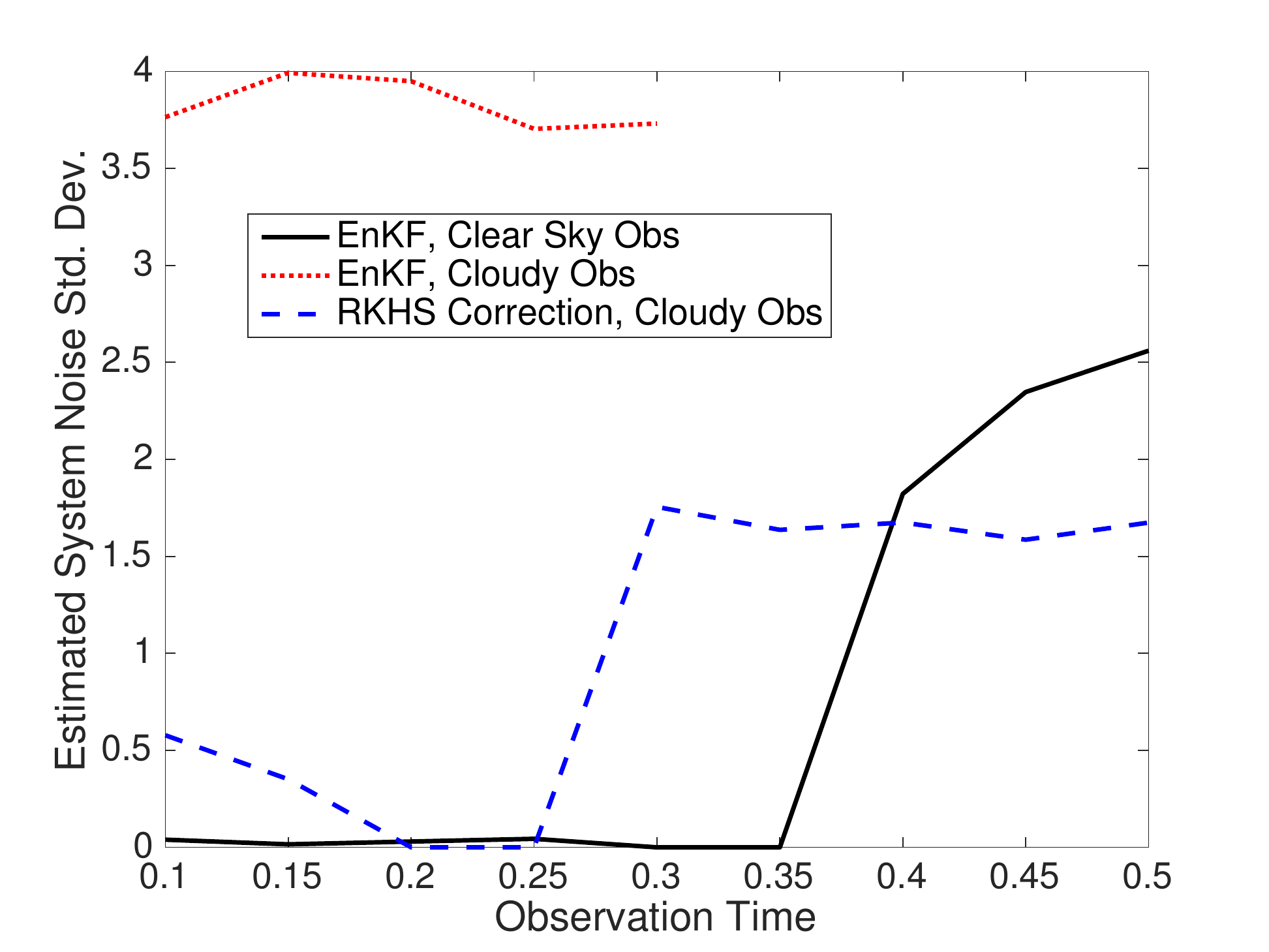}
\includegraphics[width=0.48\linewidth]{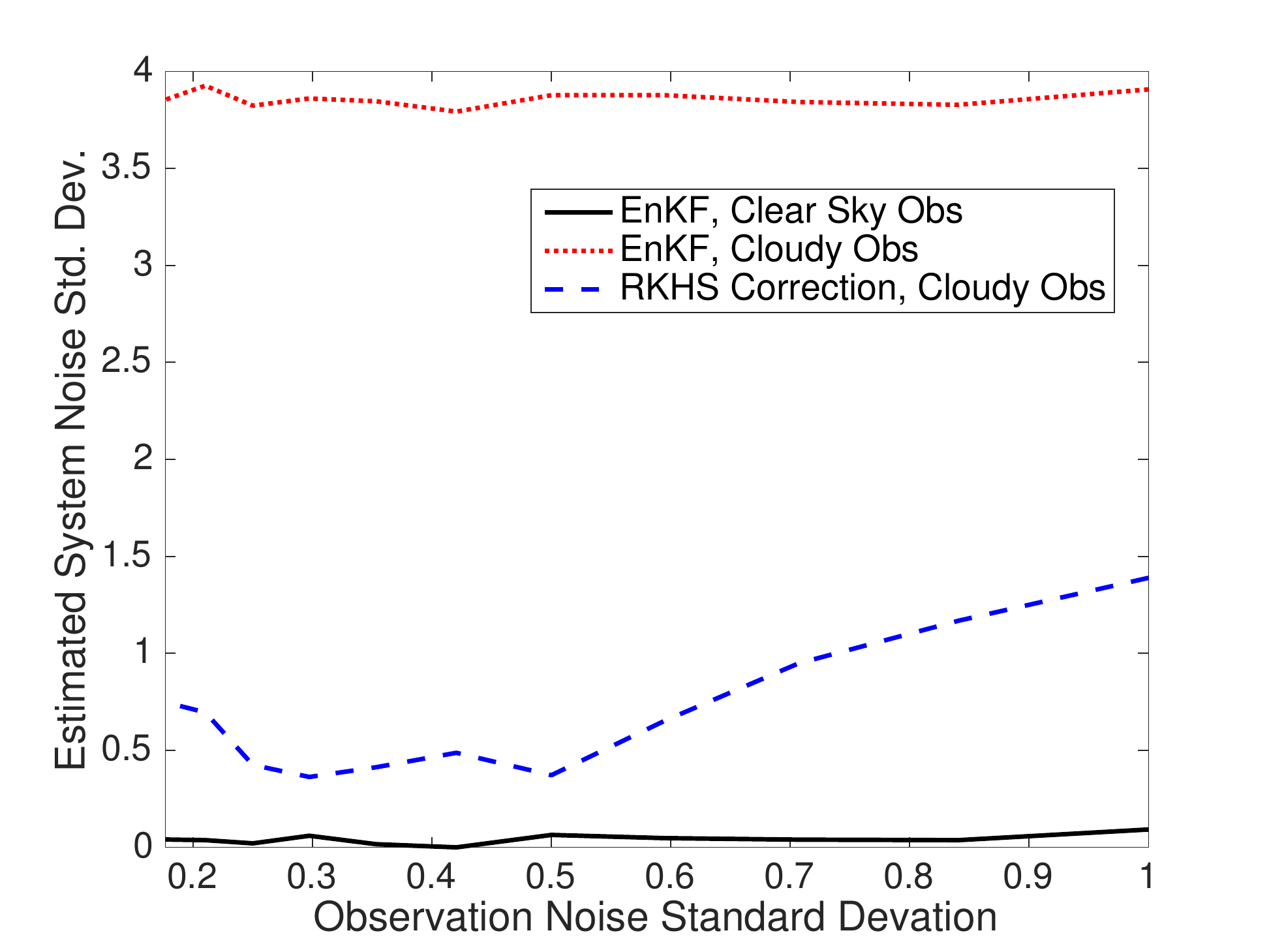}
\includegraphics[width=0.48\linewidth]{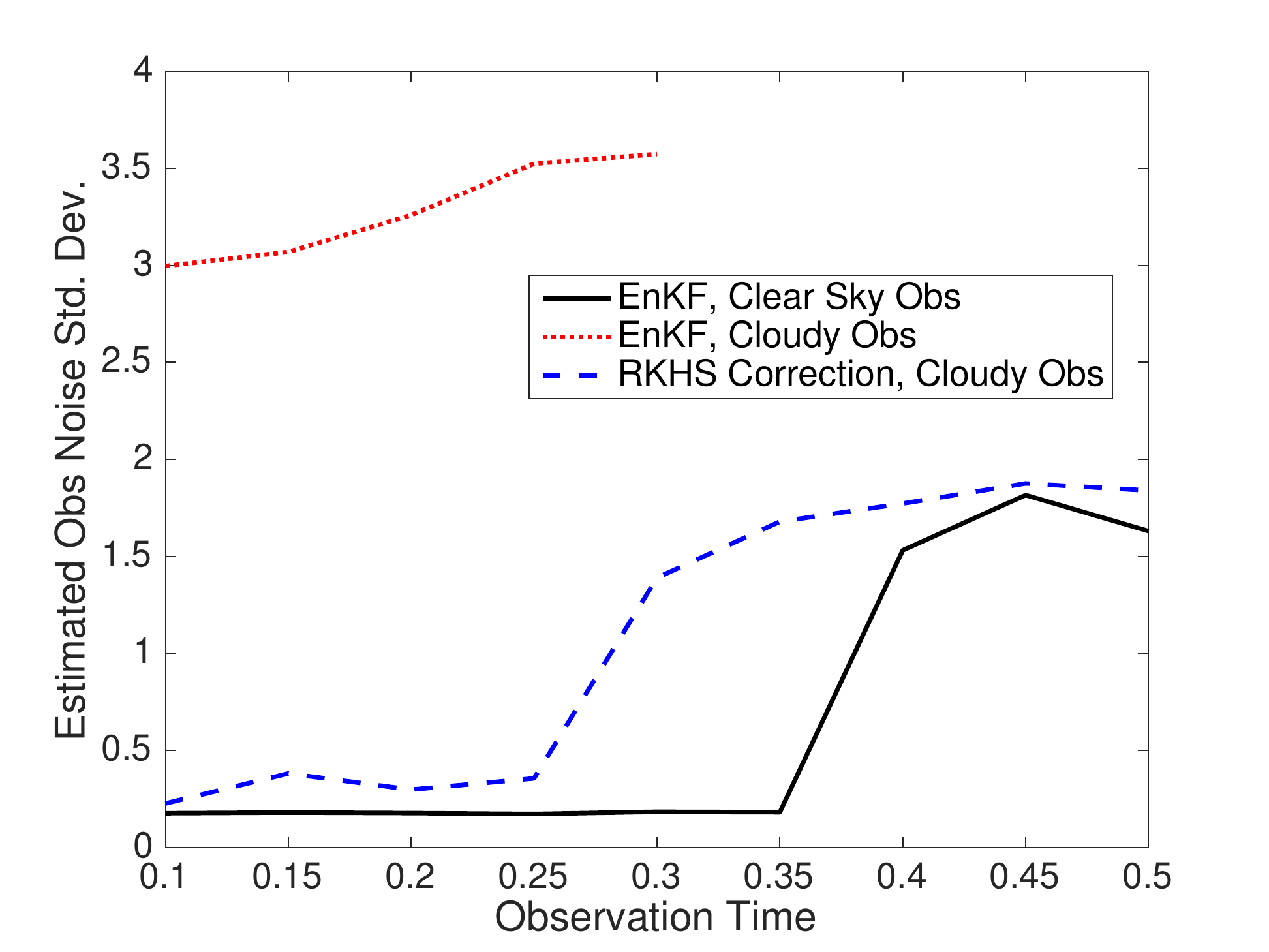}
\includegraphics[width=0.48\linewidth]{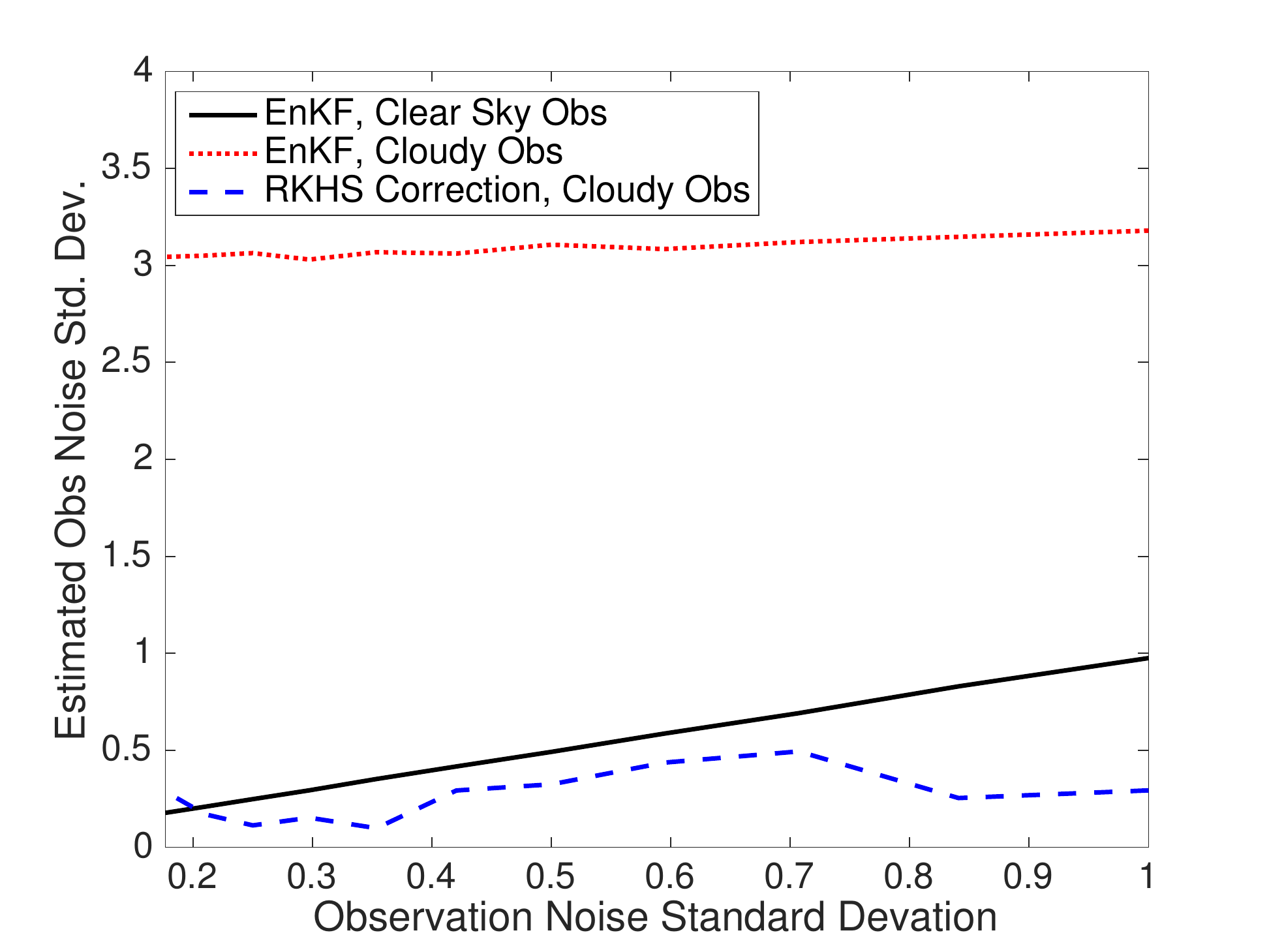}
\caption{\label{EstimatedQR} Estimated $Q$ and $R$ parameters for the various filters in Fig.~\ref{RandDT} using the adaptive estimation scheme \cite{bs:13}. }
\end{figure}

\section{Example 2: Assimilating cloudy ``satellite-like" measurements}

Here, we consider assimilating satellite measurements of an idealized stochastic cloud model for a single-column of organized tropical convection \citep{kbm:10}. The model is a system of four-dimensional ODEs that represents the time evolution of the first and second baroclinic anomaly potential temperatures, respectively, $\theta_1$ and $\theta_2$, an equivalent boundary layer anomaly potential temperature, $\theta_{eb}$, and a vertically averaged water vapor content, $q$, of a single-column tropical atmosphere. These ODEs are coupled to a stochastic birth-death process which governs the area fractions of three cloud types; congestus $f_c$, deep $f_d$, and stratiform clouds, $f_s$, over an unresolved horizontal domain corresponding to the column model. From $\theta_1$ and $\theta_2$, we can extrapolate the anomaly potential temperature at height $z$ (in km unit) using the following interpolation function \citep{kbm:10},
\BEA
T(z) = \theta_1 \sin(\frac{z\pi}{Z_T}) + 2 \theta_2 \sin(\frac{2z\pi}{Z_T}), \quad z\in[0,16],\label{temperature}
\EEA
where we model the height of the troposphere at the Tropic as $Z_T=16$ km.

In our experiment, we denote the state variables as $x=(\theta_1,\theta_2,\theta_{eb},q)$ and the cloud fractions as $f=(f_c,f_{d},f_s)$.  We will assume that the cloud top heights for both the deep and stratiform are the same, $z_d =12$ km, whereas that for the cumulus cloud is $z_c=3$ km. Based on these heights, we consider modeling the brightness temperature-like quantity at wavenumber-$\nu$ with the following two-cloud type radiative transfer model \citep{liou:02},
\begin{align}
h_\nu(x,f) &= (1-f_{d}-f_s)\Big[(1-f_c)\big( \theta_{eb} T_\nu(0)  \nonumber \\ &+ \int_0^{z_c} T(z) \frac{\partial T_\nu}{\partial z} (z)\,dz \big) \nonumber \\ &+ f_c T(z_c) T_\nu(z_c) + \int_{z_c}^{z_d} T(z) \frac{\partial T_\nu}{\partial z} (z)\,dz \Big]  \label{rtm} \\ &+ (f_d+f_s)T(z_d) T_\nu(z_d) + \int_{z_d}^\infty T(z) \frac{\partial T_\nu}{\partial z} (z)\,dz,\nonumber
\end{align}
where $T_\nu(z)$ denotes the transmission from height $z$ to the satellite, which is assumed to be a decaying function of height and also depends on $q$. We specify the wavenumbers $\nu$ such that the weighting functions, $\frac{\partial T_\nu}{\partial z} (z)$, are maximized at heights $z_{max}=1, 2, \ldots, 16$ km. In Fig.~\ref{weight}, we show the weighting functions (black solid) with the minimum humidity associated with $z_{max}=2, 5, 8, 10$ km (black dashes). We also include the weighting functions corresponding to the maximum humidity value (red solid). This is to show that depending on the value of $q$, the shape of weighting functions will vary between these two weights. (see Appendix C for the detailed construction of these weighting functions). 

\begin{figure}
\centering
\includegraphics[width=.6\linewidth]{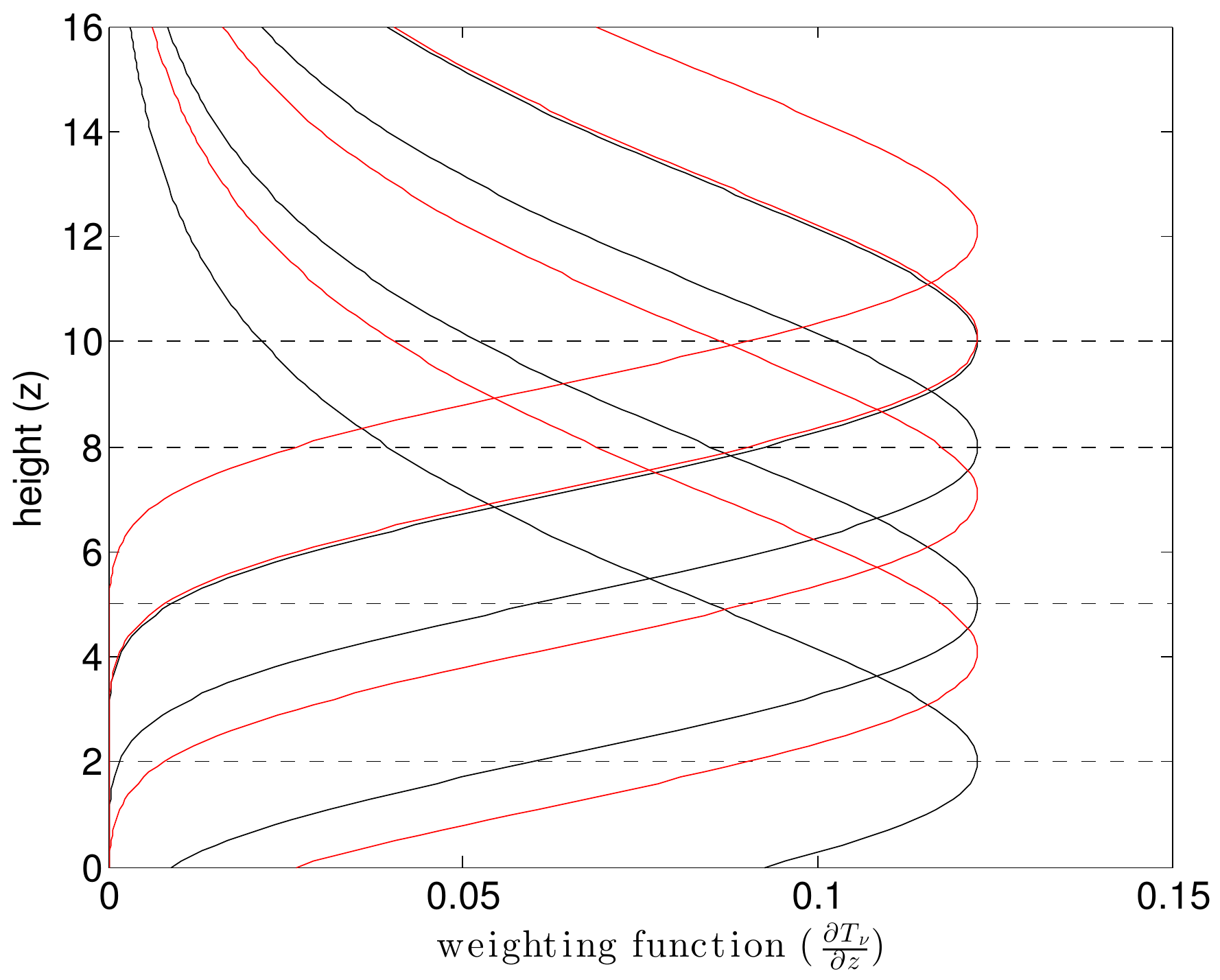}
\caption{The weighting functions for the minimum value of $q$ (black solid) which is used as a reference to define the absorption rate corresponding to $z_{max}= 2, 5, 8, 10$ km. We also show the weighting functions for the maximum value of $q$ corresponding to these maximum heights (red).} 
\label{weight}
\end{figure}

Essentially, the terms inside the square bracket in \eqref{rtm} denote the contribution below the deep and stratiform clouds, the first term on the third row denotes the contribution from the deep cloud top height and the last term denotes the contribution of the free atmosphere above the deep cloud. By setting $f=0$, we obtain the clear-sky observations. Consider implementing the primary filter with an incorrect observation model $\tilde{h}_\nu(x) = h_\nu(x,0)$, which means that we assume the observed brightness temperatures are from clear-sky measurements. In Fig.~\ref{cloudR1}, we show the scatter plot of the observations which are defined as $y_\nu = h_\nu(x,f)+\eta_\nu$, as functions of the model error, $b_\nu=h_\nu(x,f)-\tilde{h}_\nu(x)$. In this example the measurement errors are specified as a percentage of the variance of the measured variables, so $\eta_\nu\sim\mathcal{N}(0,R^o)$ with $R^o=10^{-3} \times \textup{var}(h_\nu(x,f))$ where $\textup{var}(h_\nu(x,f))$ is the variance of the observation.  We will also consider the robustness to the measurement error $R^o$ below.

\begin{figure}
\centering
\includegraphics[width=.95\linewidth]{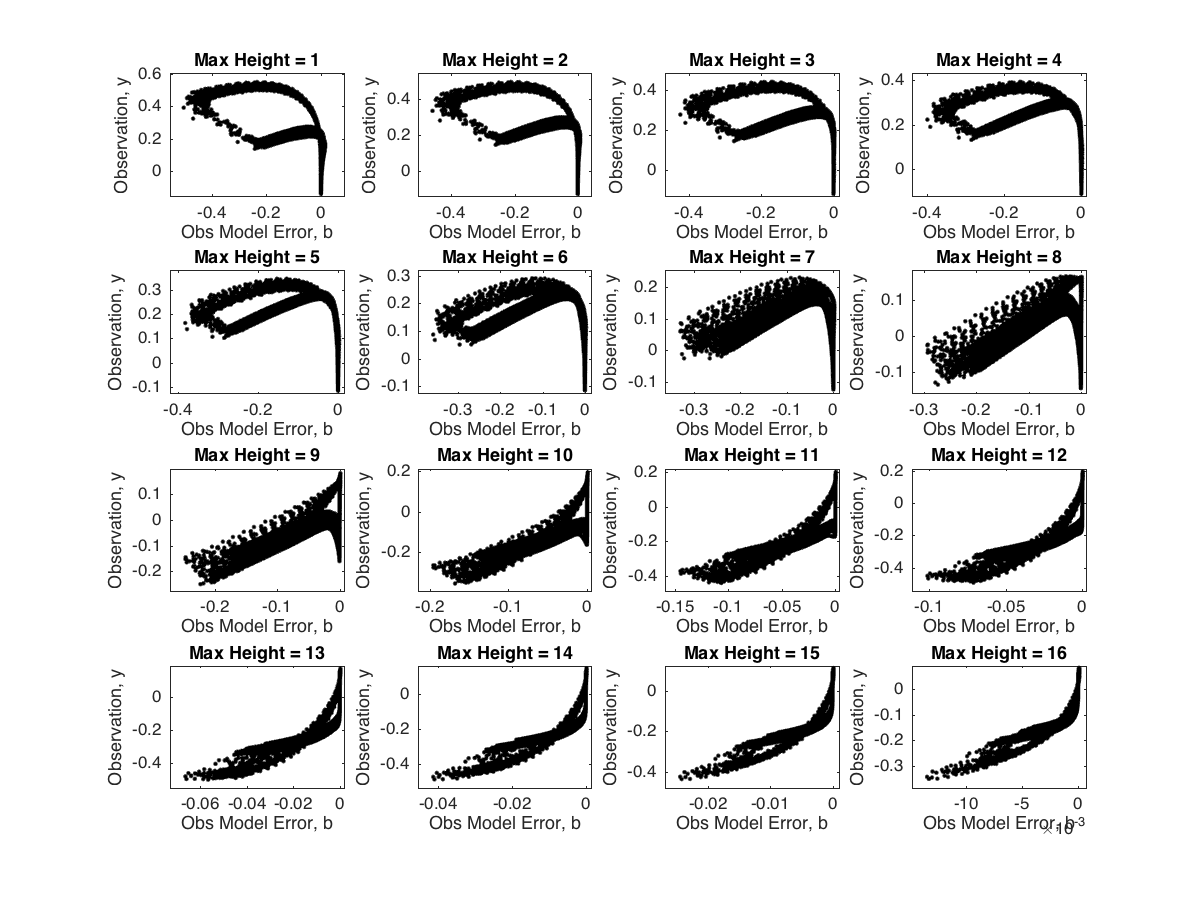}
\caption{\label{cloudR1} Scatter plots of the cloudy observations, $y_\ell$, versus the observation model errors, $b_\ell$, for the observation at 16 frequencies-$\nu$ corresponding to weighting functions with maximum heights at $z_{max}=1,2,...,16$, respectively.}
\end{figure}

We observe at each observation time, $\Delta t= 0.0035$ and use an ensemble size of $K=14$. In Fig.~\ref{secondaryfilter}, we show examples of the secondary filter at two instances for observations at wavenumber-$\nu$ corresponding to weighting function with $z_{max}=6$ km. The two observations are relatively close to each other, $y_i=0.184$ and $y_i=0.195$, but the former has very small observation error while the latter has a relatively large observation error (see the blue dots in these two figures). Here, the nonparametric likelihood functions are trained with $N=5000$ and $M=400$. To correct the observation model error, we ran the secondary filter in \eqref{Bayes2} with the Gaussian prior with time-dependent variances $\sigma_{b_i}^2$ as defined in \eqref{sigmab} (grey curves). In this example, we could not use the time-independent variance, $\sigma_b^2$, from the training data $b_\ell$ since the resulting prior is too broad and was uninformative. However, since $\sigma_{b_i}^2$ is very small (see e.g. Fig.~\ref{cloudR1}), it is possible that the support of the prior will be almost entirely be between the two modes of a bimodal likelihood function, which will cause the posterior normalization factor $Z$ to be very small.  Thus, when $Z$ is found to be less than a threshold $Z^*$, we do not apply the secondary filter, since in these cases the likelihood function does not give enough information to inform the secondary filter. When the prior is on the tail of the likelihood function, we do not apply this thresholding step. In our numerical implementation, we found our results to be robust to a large range of thresholds $Z^* \in [10^{-30},10^{-2}]$. 

\begin{figure}
\centering
\includegraphics[width=0.48\linewidth]{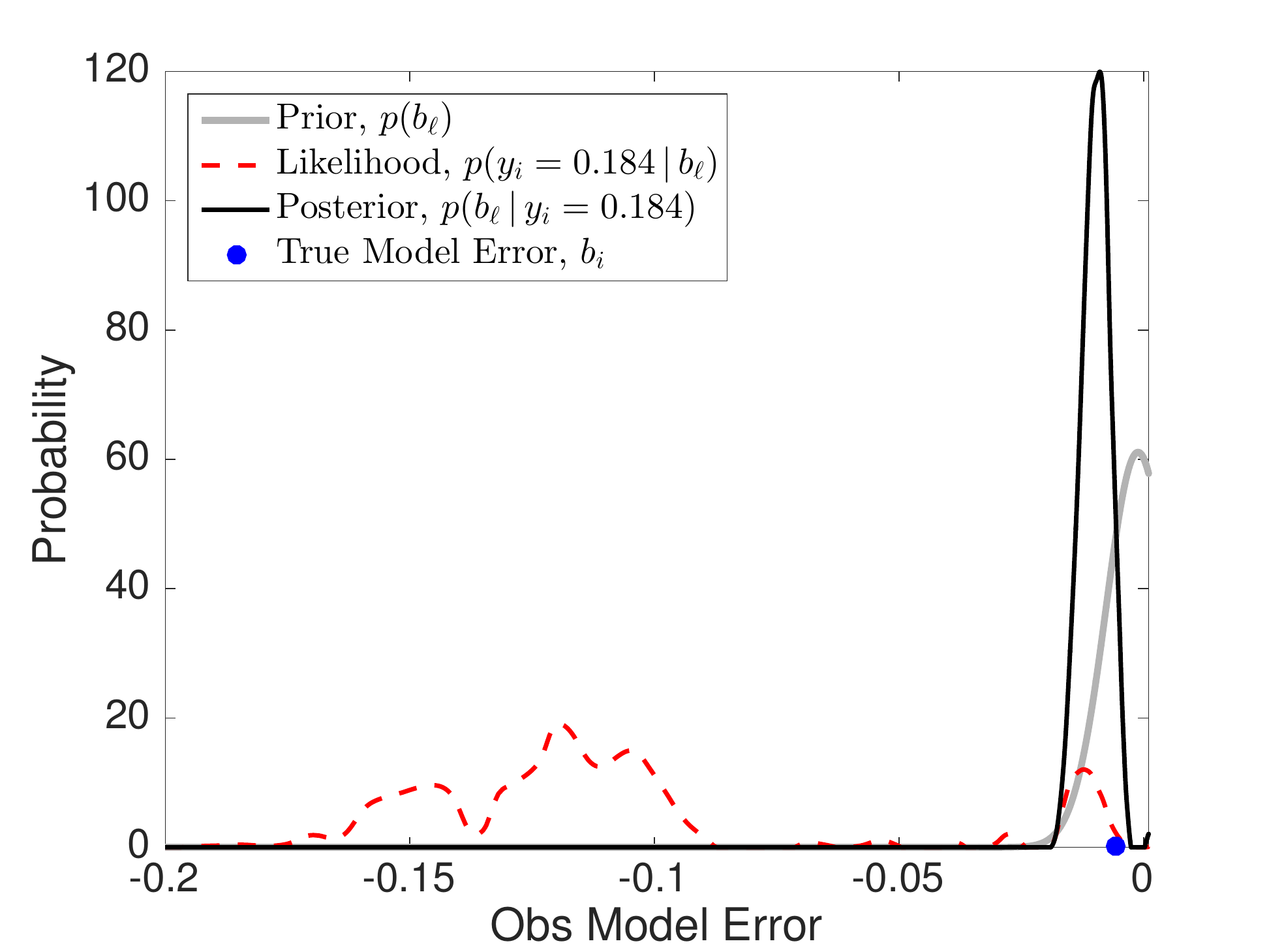}
\includegraphics[width=0.48\linewidth]{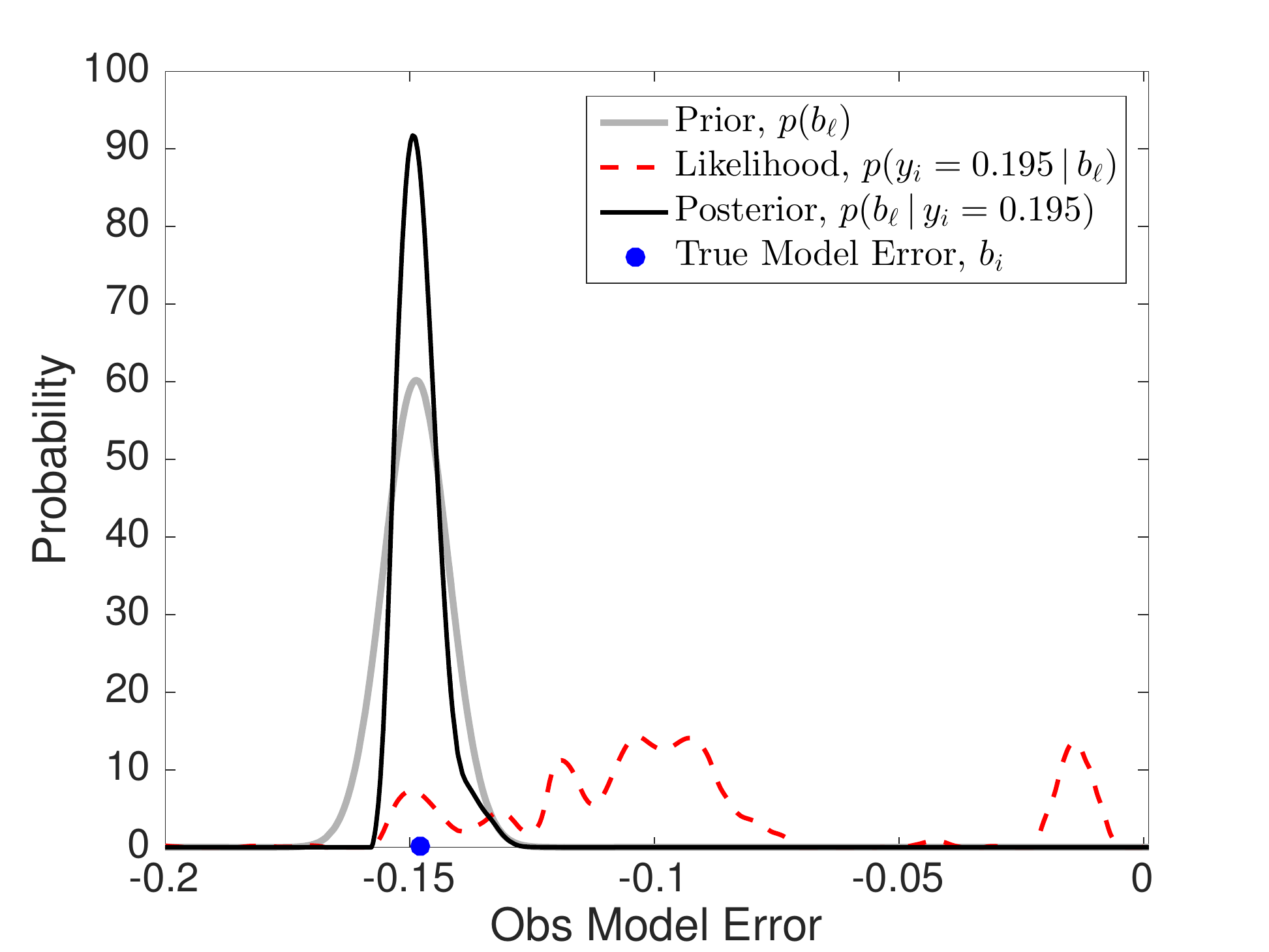}
\caption{\label{secondaryfilter} Bimodal likelihood and Bayesian update from the secondary filter at two random instances, one with small observation model error (left) and the other with large error (right). These two plots are for the observations at wavenumber-$\nu$ corresponding to weighting function with $z_{max}=6$ km.}
\end{figure}

In Fig.~\ref{cloud1}, we show the filter estimates as timeseries of all seven model variables $\theta_1, \theta_2, \theta_{eb}, q, f_c, f_d, f_s$ when the measurement error is $R^o=10^{-3}\times \textup{var}(h_\nu(x,f))$ or 0.1\% of the observation variance. For diagnostic, we also include EnKF assimilating the same cloudy observations with the true observation function. Notice that the RKHS is as accurate as the EnKF implemented with the true observation function. On the other hand, the EnKF based on the incorrect observation model does not produce accurate estimates, despite using an inflated observation covariance matrix in the filter (a standard method for overcoming model error). In this example we do not use any adaptive covariance estimation, all filters use an additive covariance inflation $Q=10^{-3} \times I_{7\times 7}$, and an inflated $R=5R^o$.   

In Fig.~\ref{cloudRobust} we compare the filter Relative Mean Square Error (RMSE) as a function of the measurement error variance, $R^o$, ranging from 0.1\% to 2\% of the observation variance. Here, the Relative MSE is defined as the ratio between the MSE and the variance of the corresponding state variable, where MSE is averaged over 2500 filter steps. As we saw in the L96 example above, when the measurement error is small the RKHS performance approaches that of the EnKF given the true observation function. For large measurement error, the performance of all of the filters degrades significantly, even the EnKF given the true observation function. In Fig.~\ref{cloud2} we show the filter estimates as timeseries for all seven model variables at the highest measurement error which is 2\% of the observation variance. In this case, notice that even the EnKF given the true observation function produces inaccurate filter estimates for some variables (with large errors in $q, f_s, f_d, f_c$). This illustrates the difficulty of filtering these highly nonlinear observations given by the RTM in the presence of large measurement error.  For small measurement errors, the RKHS correction is again effective at overcoming the severe bias introduced by the observation model error.

\begin{figure}
\centering
\includegraphics[width=0.48\linewidth]{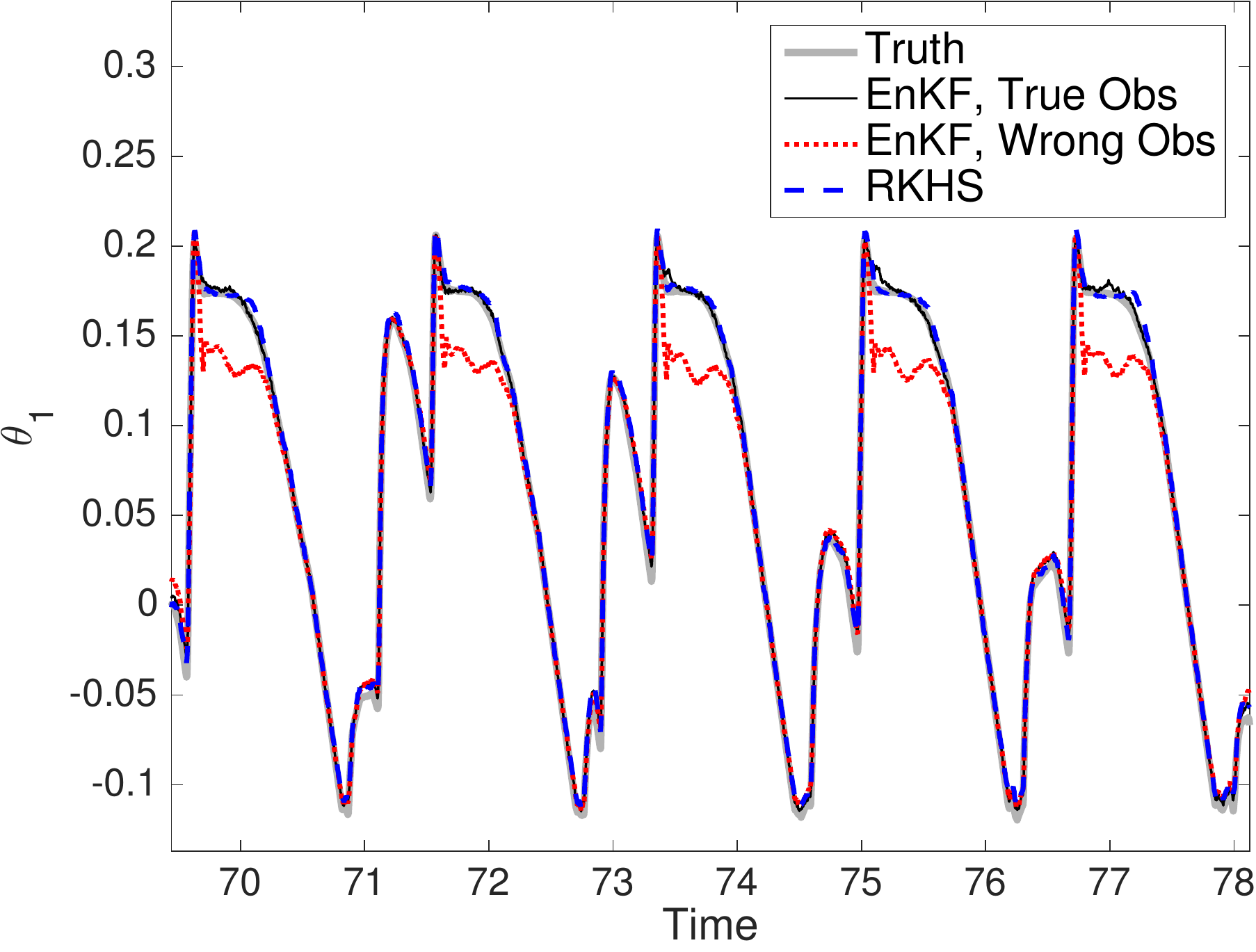}
\includegraphics[width=0.48\linewidth]{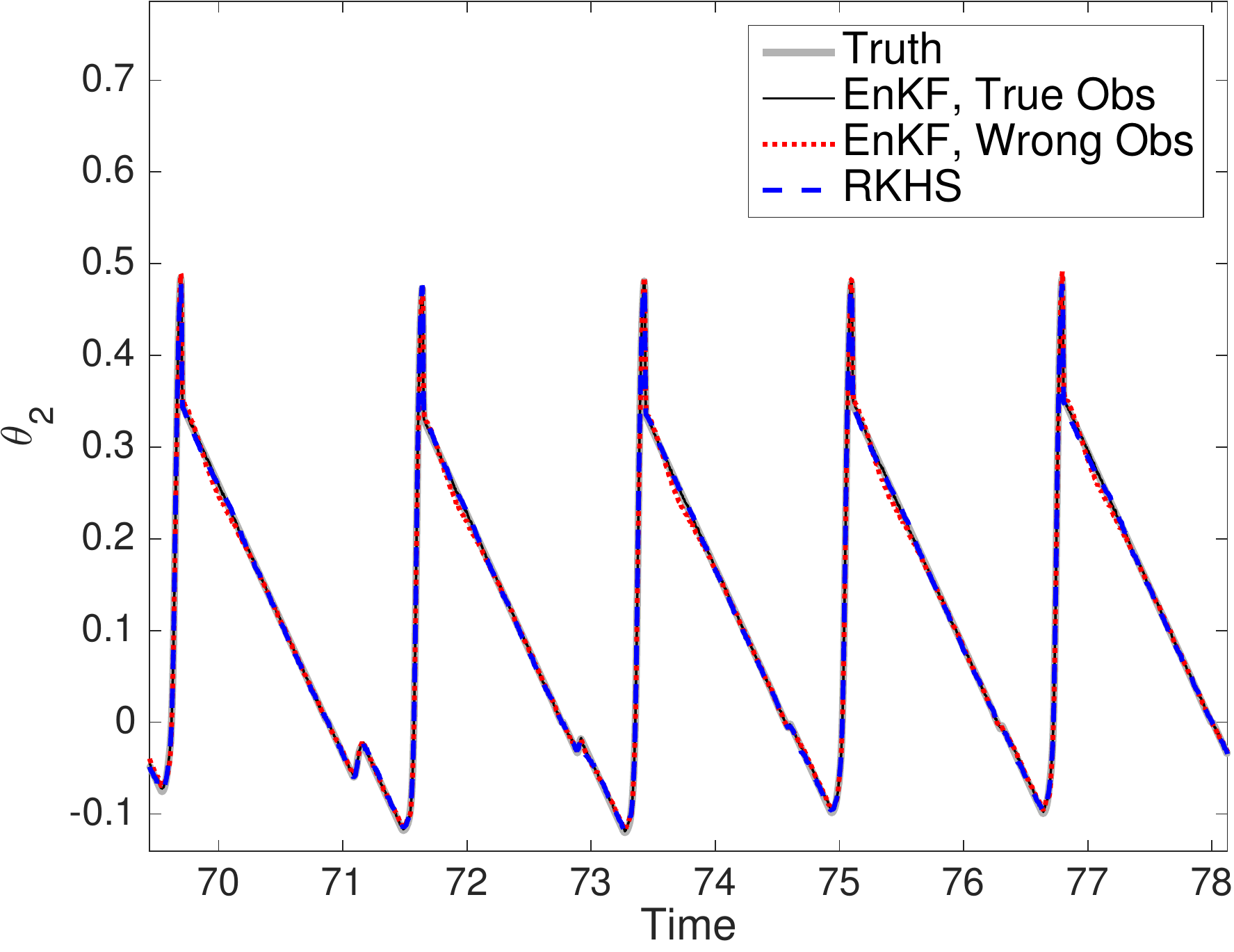}
\includegraphics[width=0.48\linewidth]{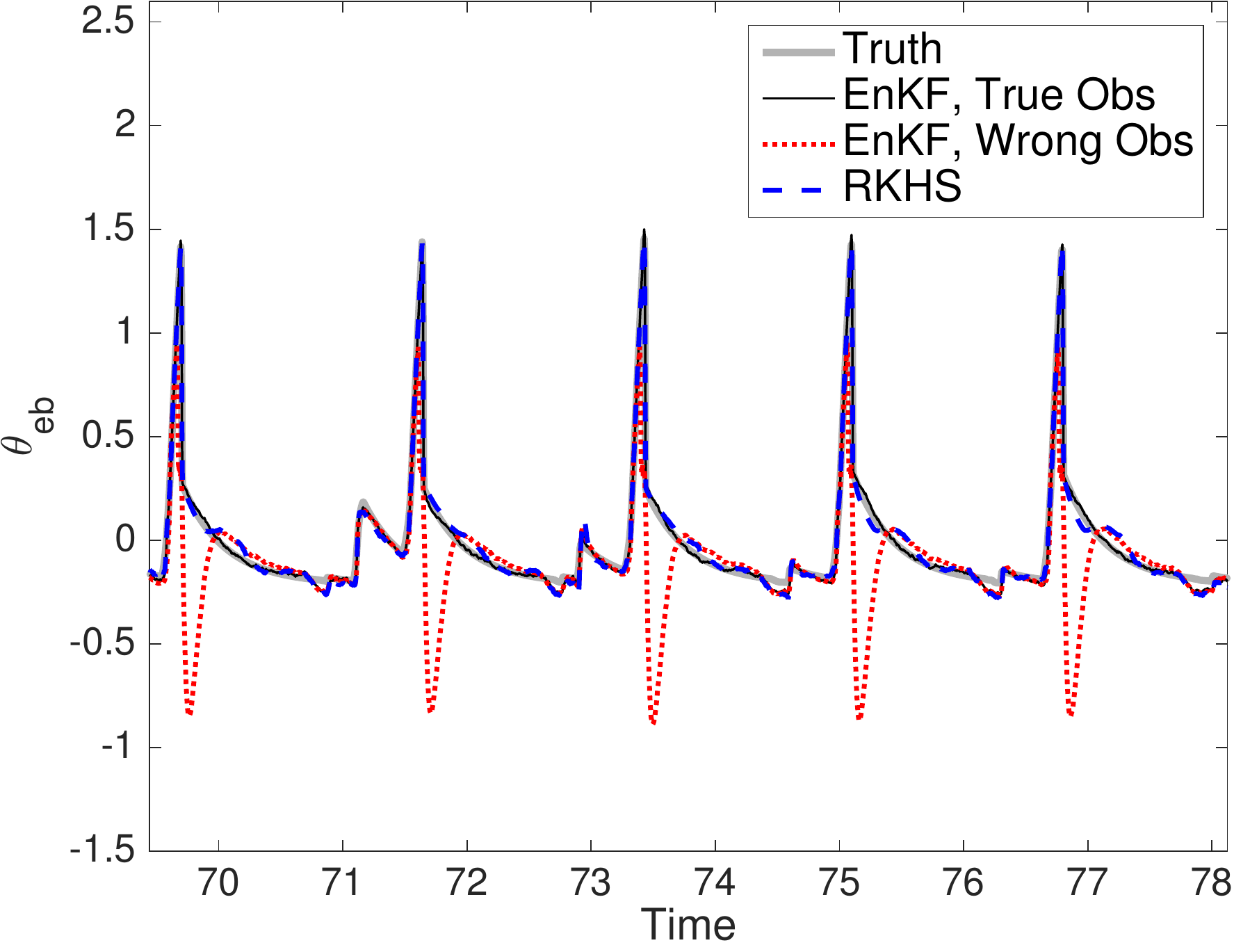}
\includegraphics[width=0.48\linewidth]{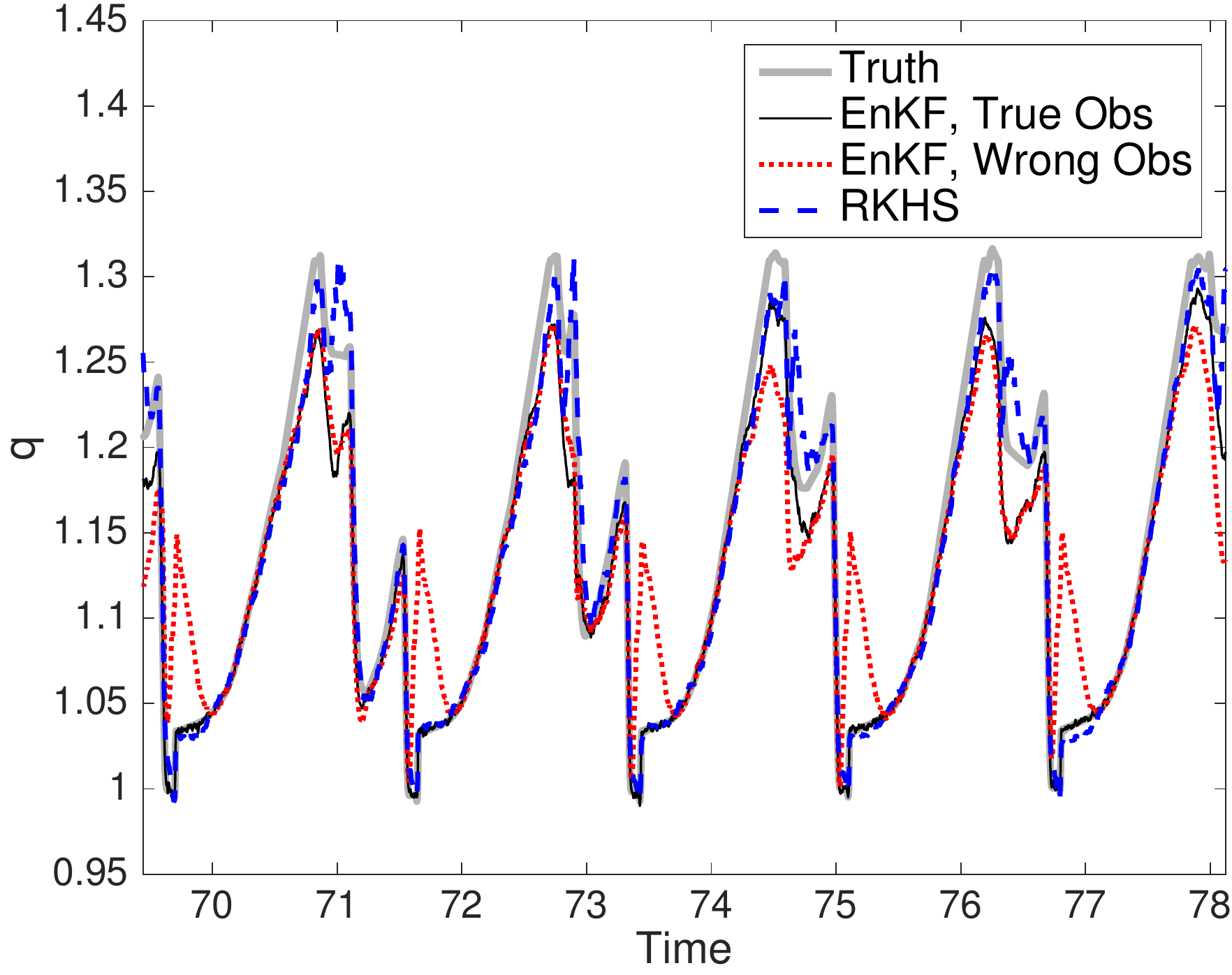}
\includegraphics[width=0.48\linewidth]{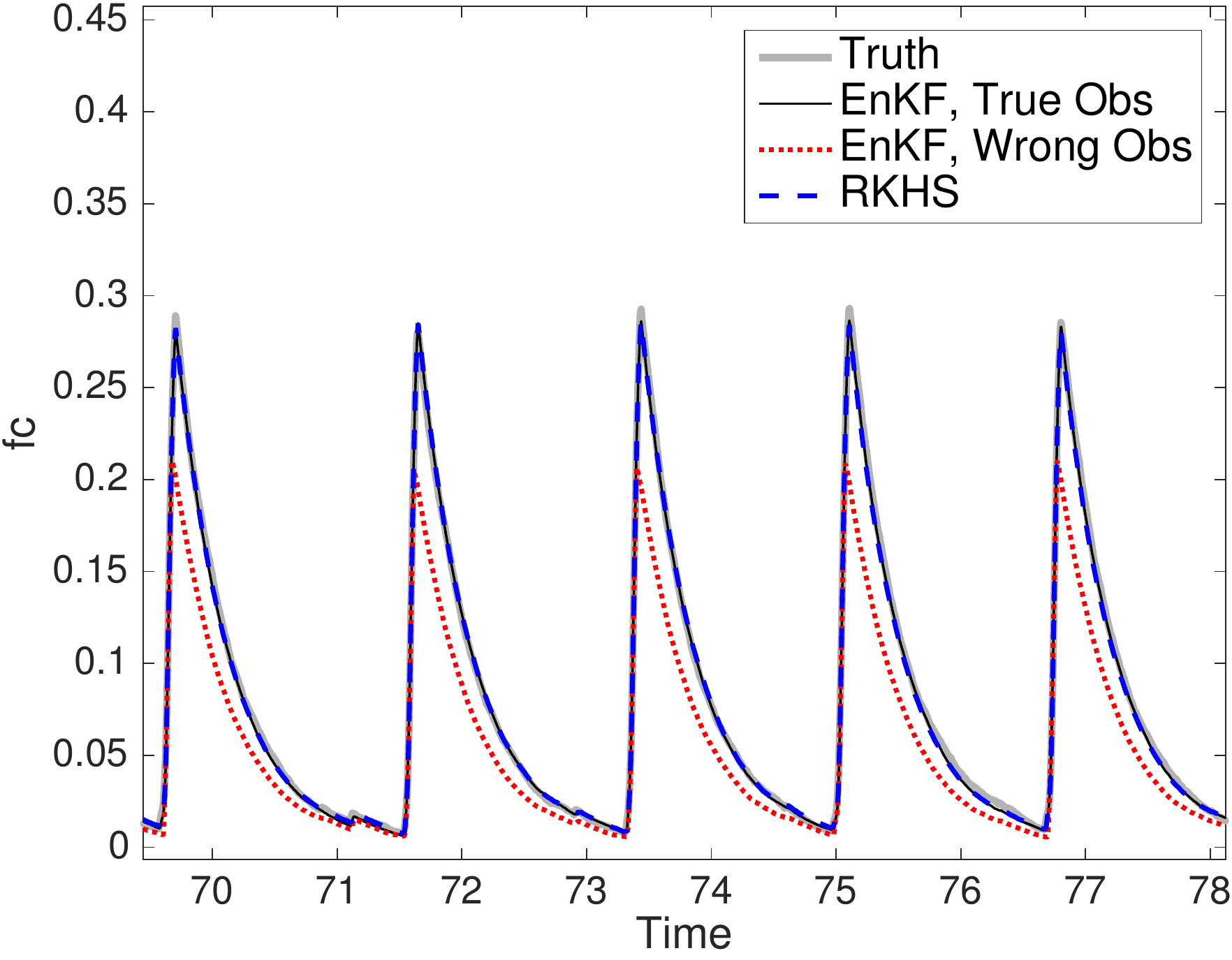}
\includegraphics[width=0.48\linewidth]{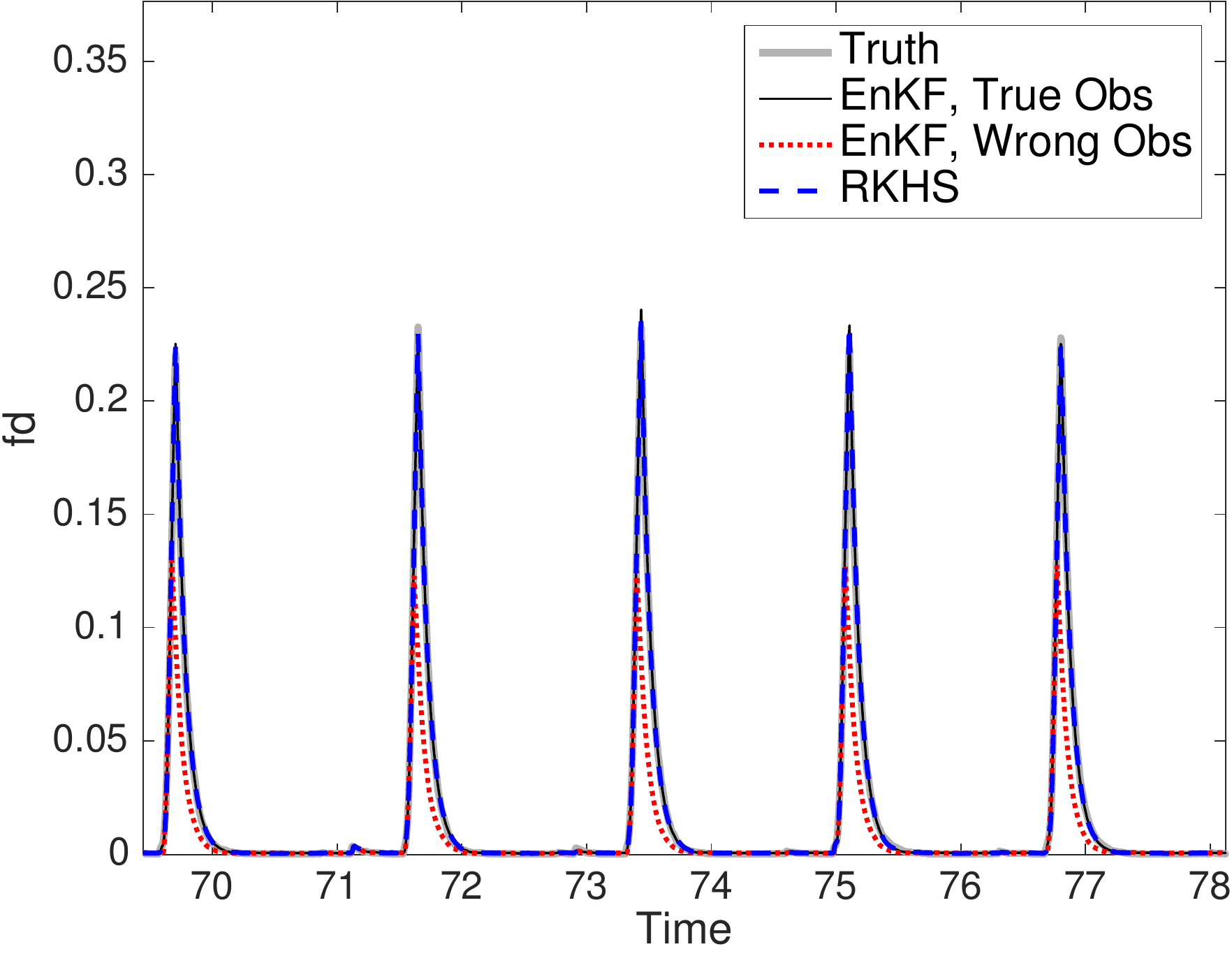}
\includegraphics[width=0.48\linewidth]{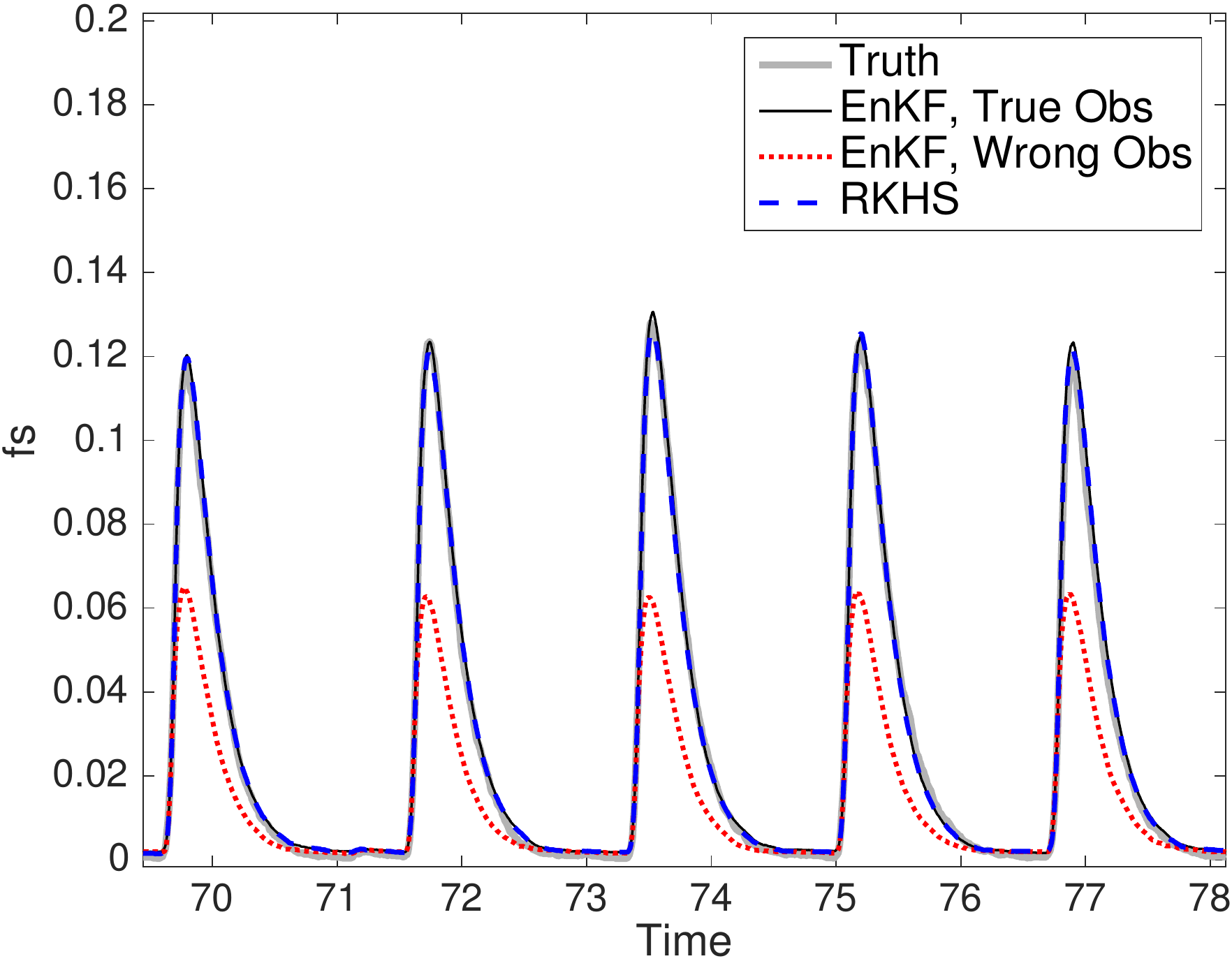}
\caption{\label{cloud1} Time series of filter estimates compared to the true state variables for all 7 model variables in the presence of small measurement error $R^o=10^{-3}\times \textup{var}(h_\nu(x,f))$ which is 0.1\% of the observation variance.}
\end{figure}

\begin{figure}
\centering
\includegraphics[width=0.32\linewidth]{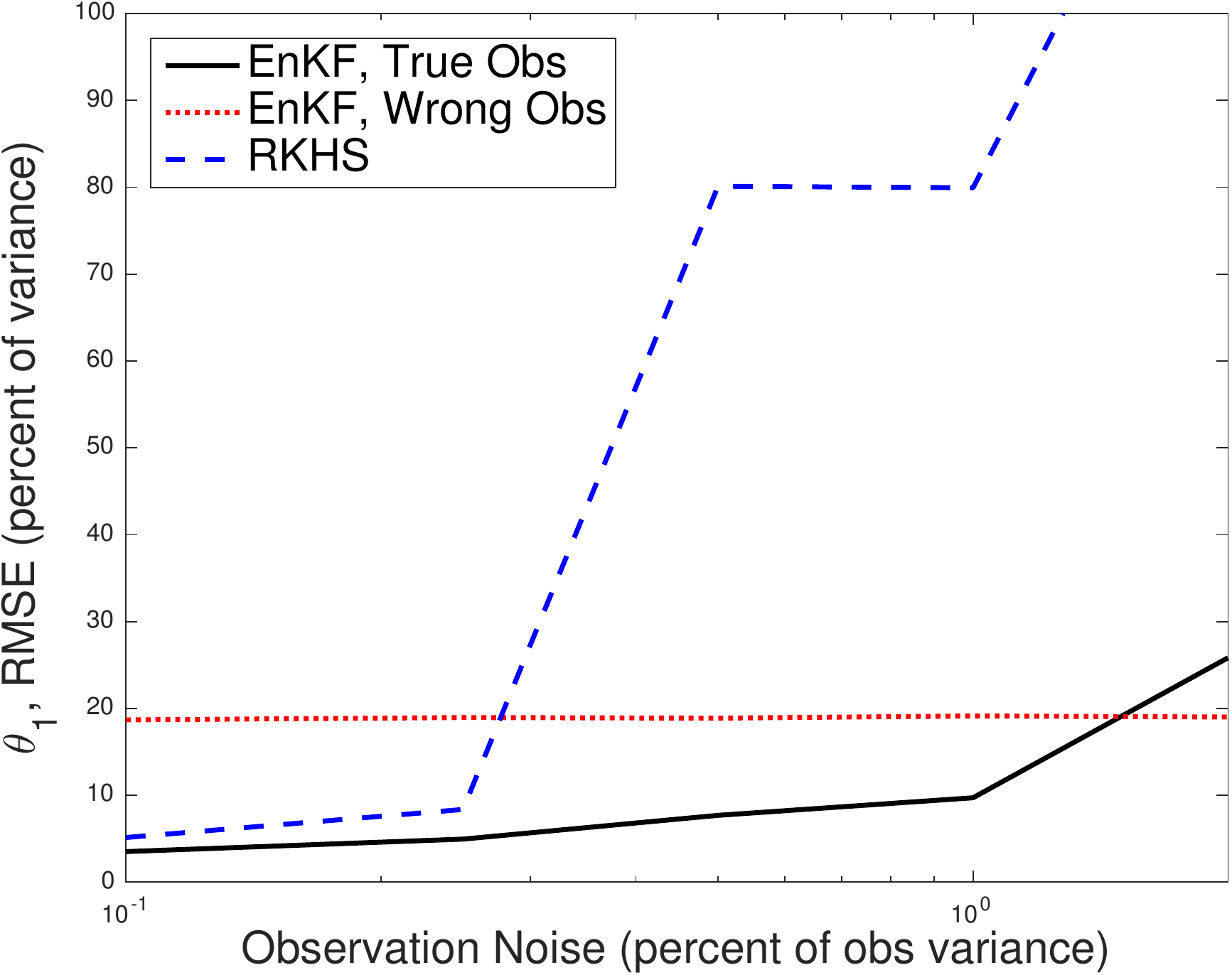}
\includegraphics[width=0.32\linewidth]{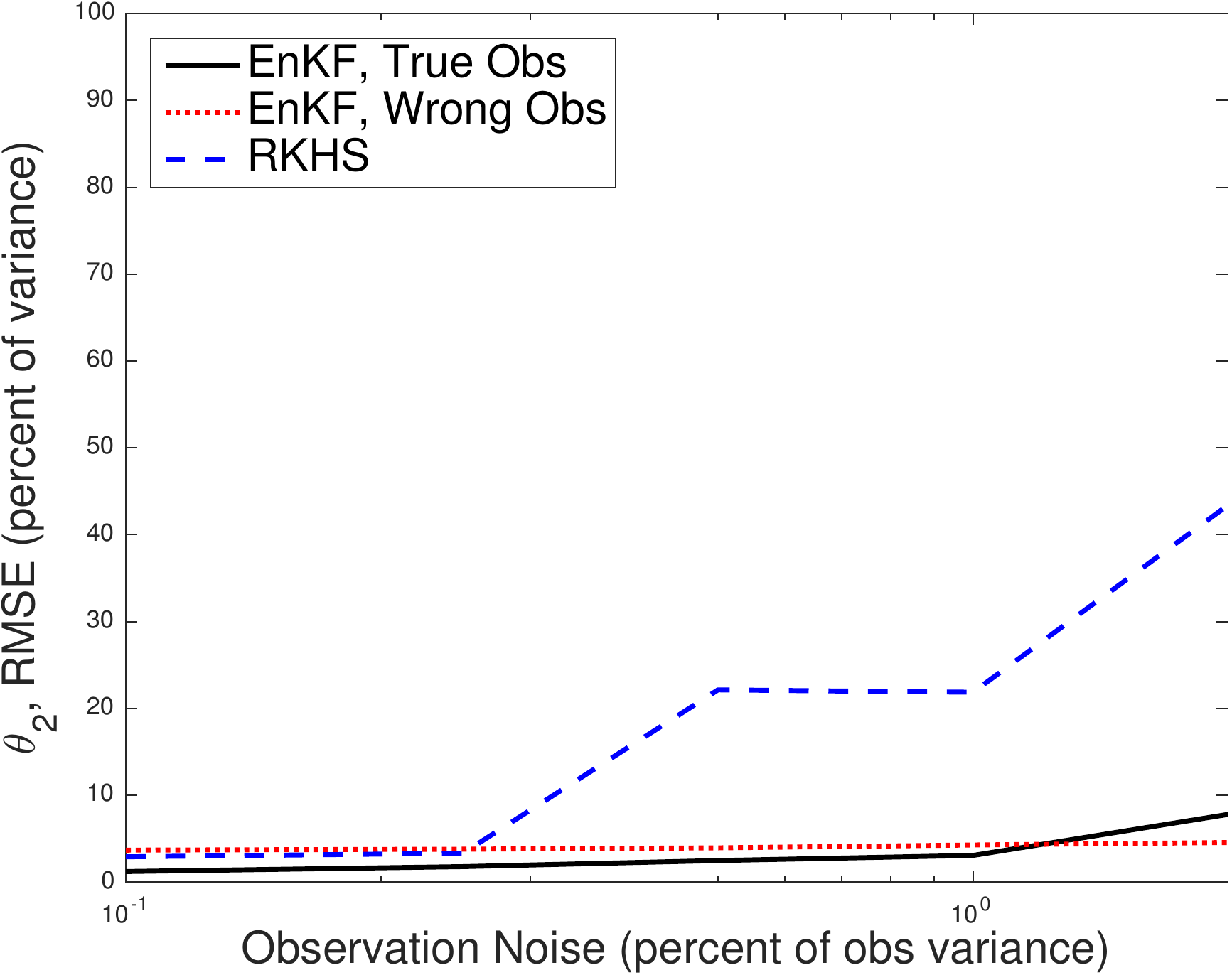}
\includegraphics[width=0.32\linewidth]{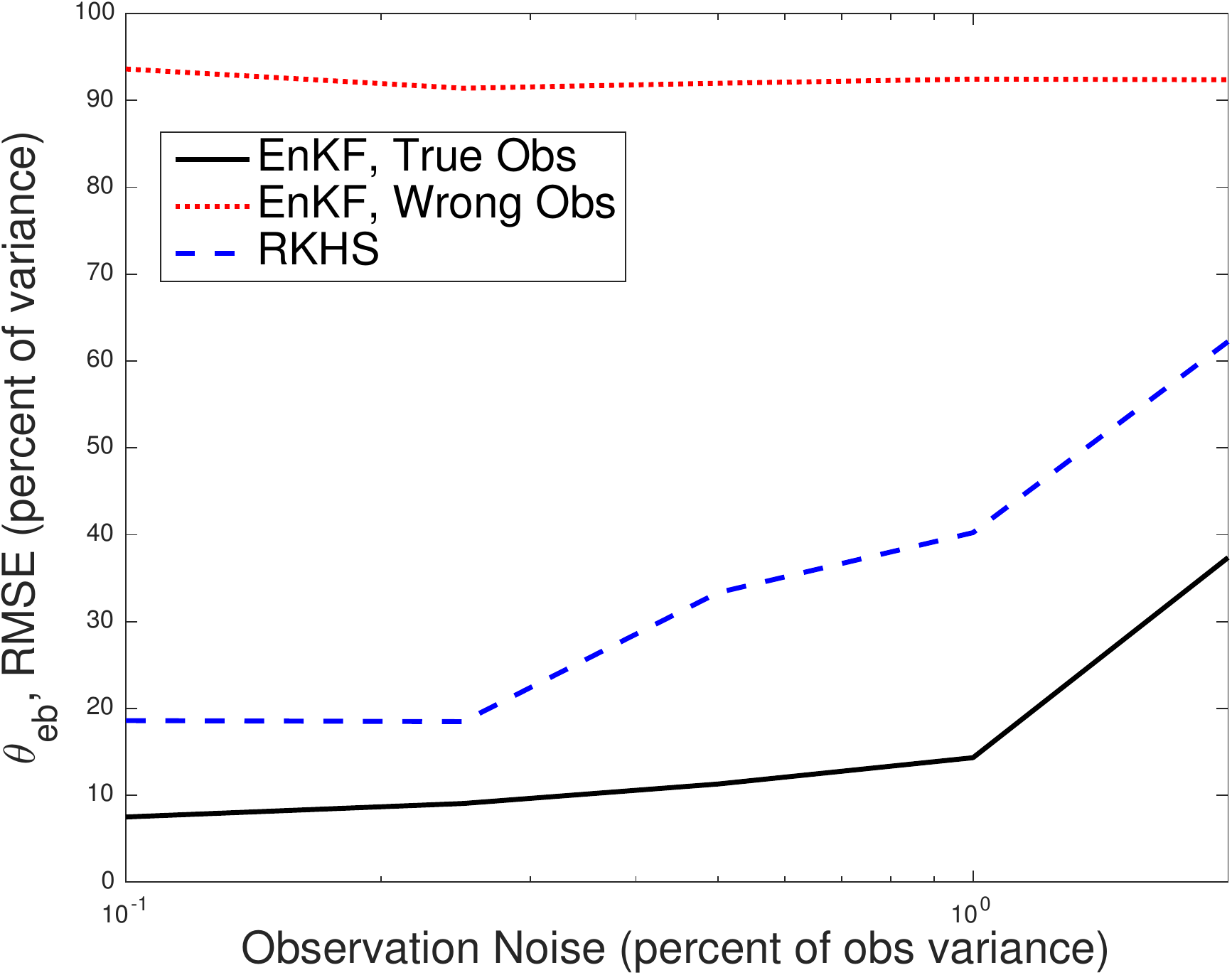}
\includegraphics[width=0.32\linewidth]{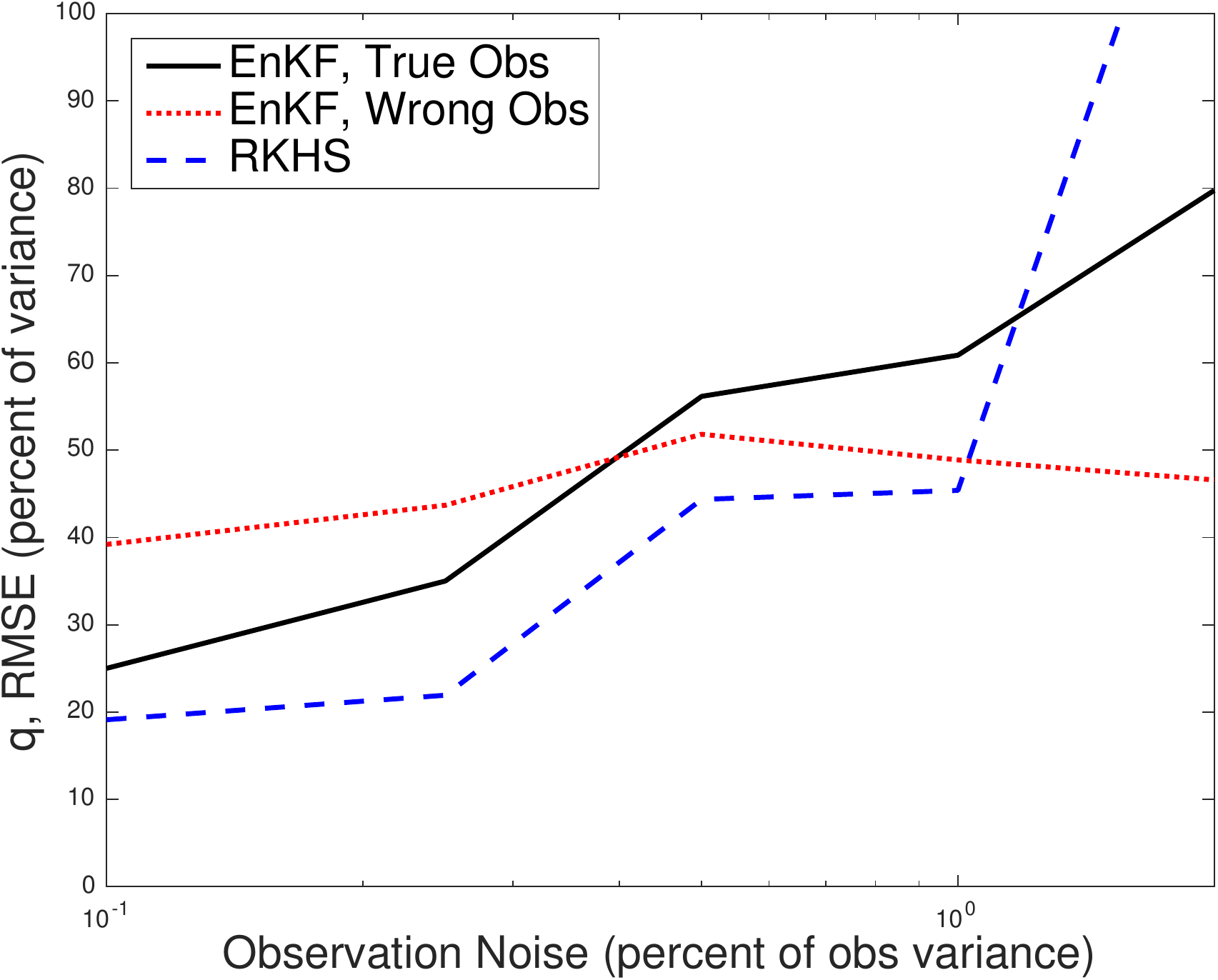}
\includegraphics[width=0.32\linewidth]{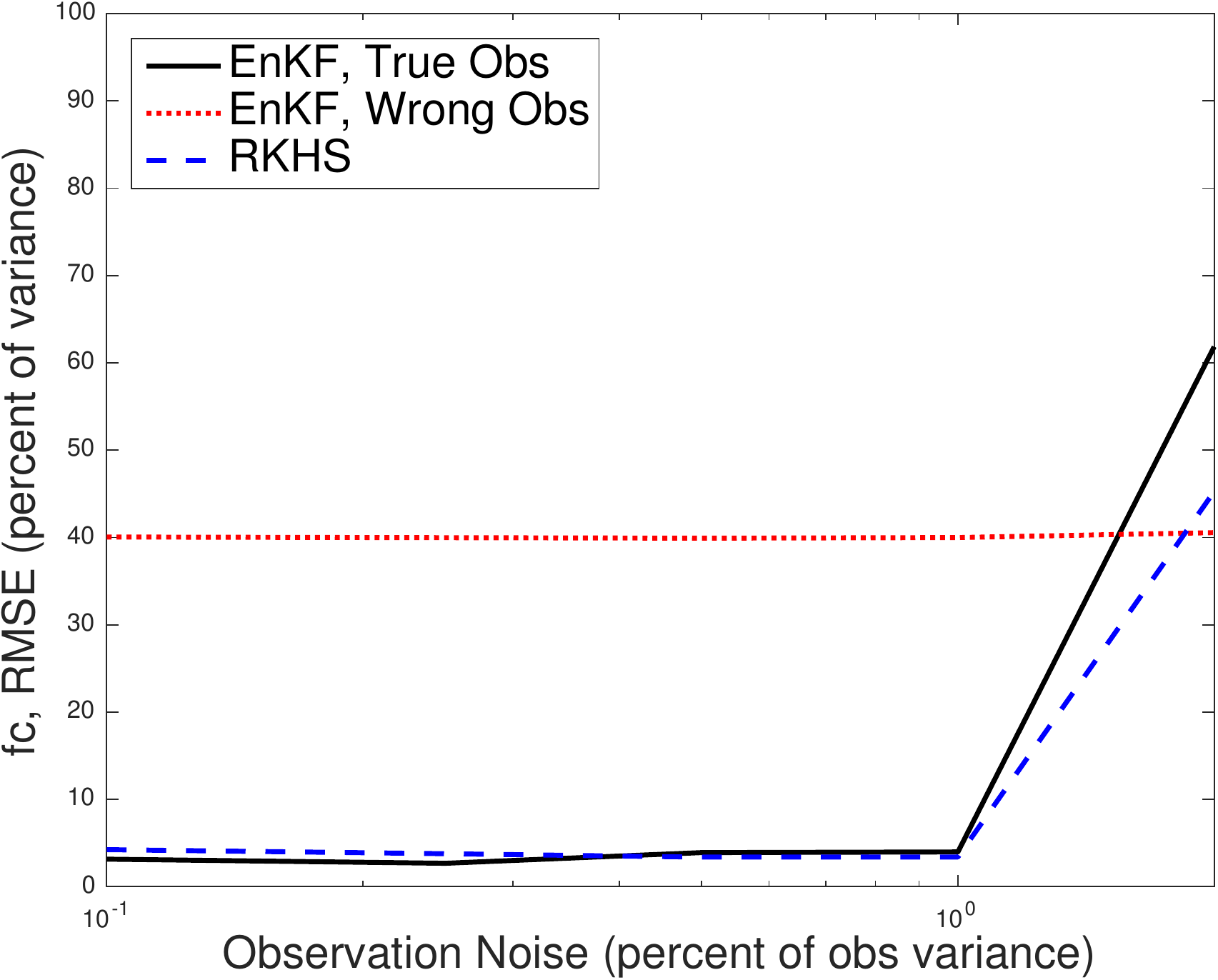}
\includegraphics[width=0.32\linewidth]{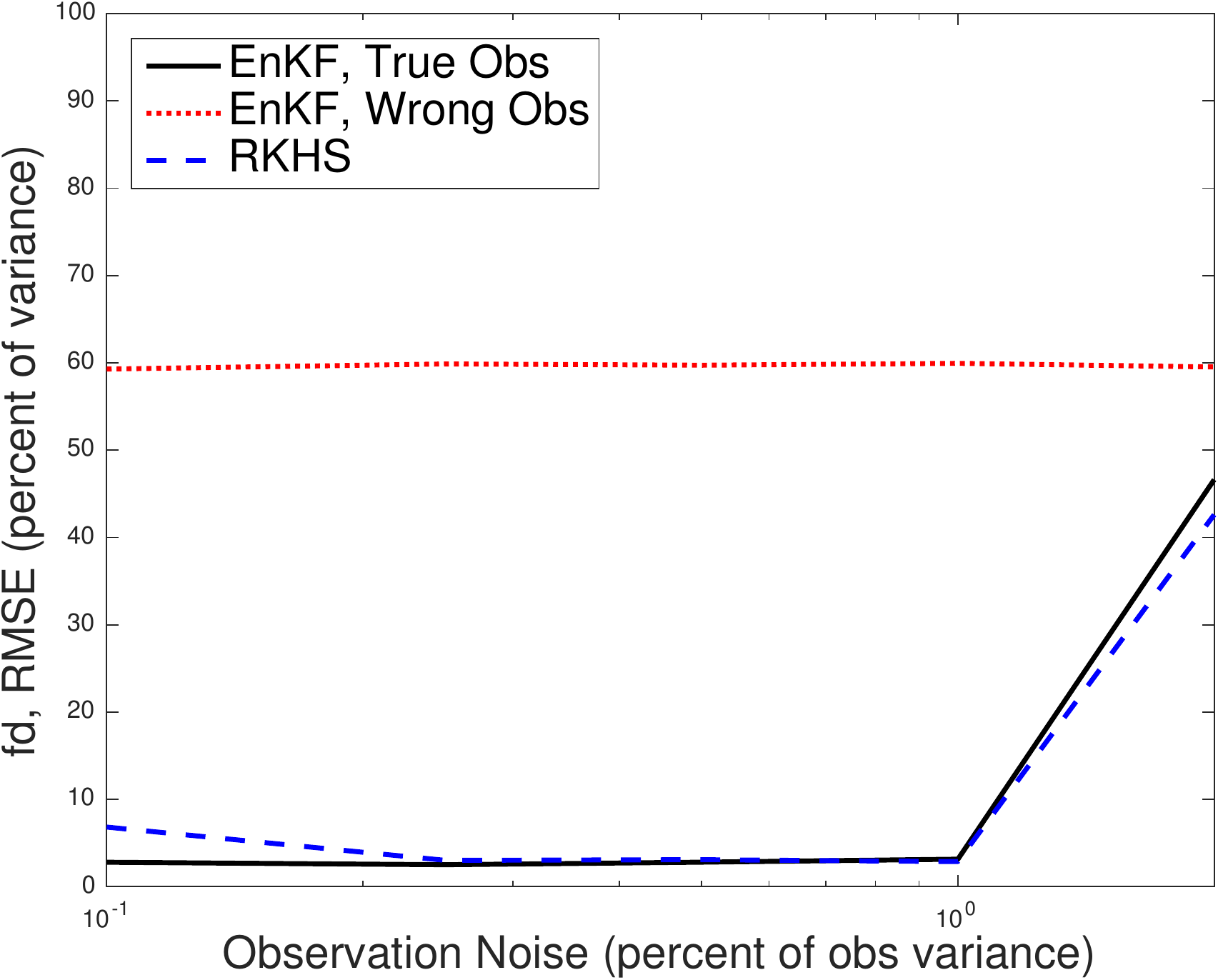}
\includegraphics[width=0.32\linewidth]{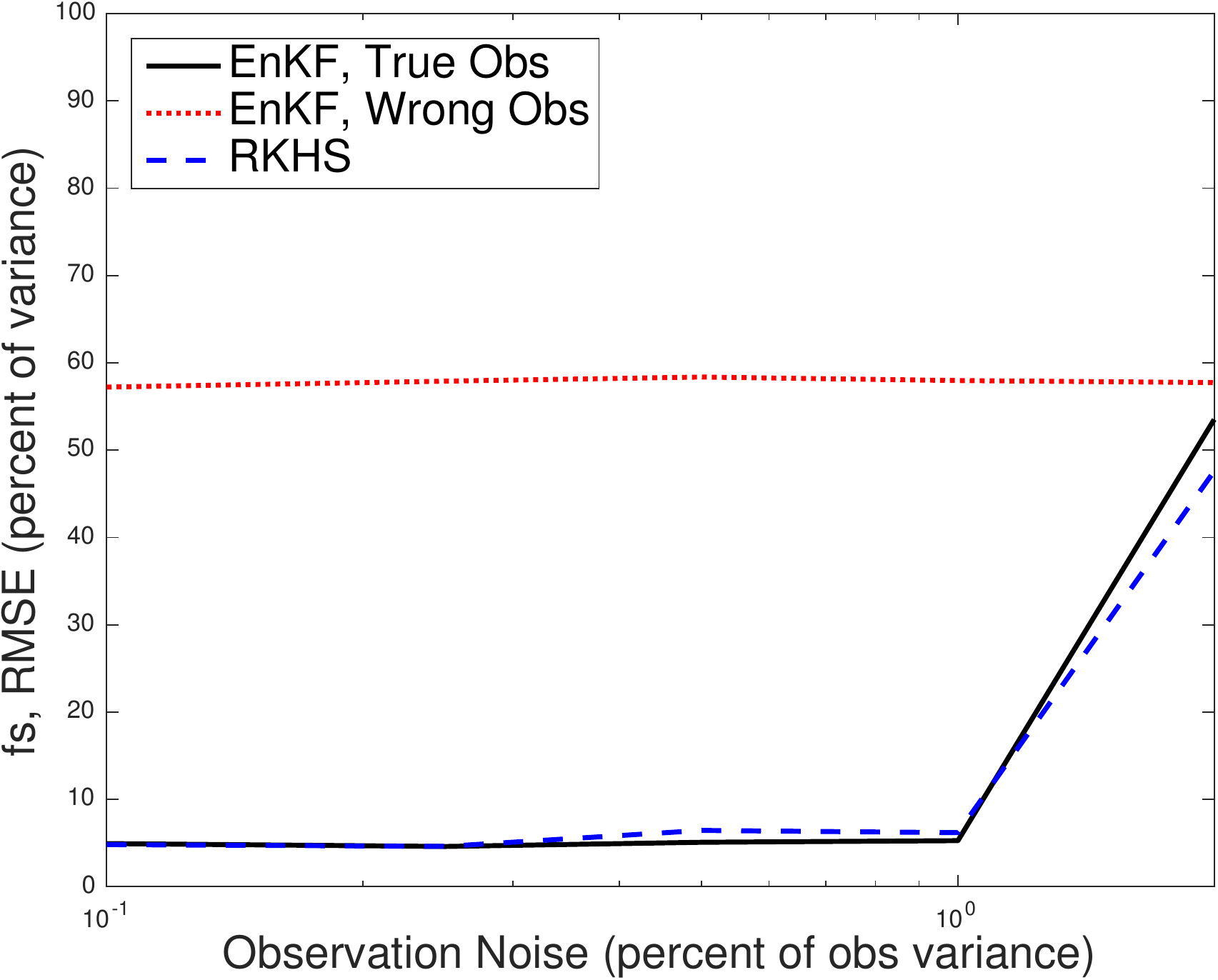}
\caption{\label{cloudRobust} Robustness of the various filters to increasing measurement error $R^o$ from 0.1\% up to 2\% of of the observation variance. Performance is measured by Relative Mean Squared Error to the variance of the corresponding variable.  Notice that for large measurement error the performance of all the filters decreases signficantly.}
\end{figure}

\begin{figure}
\centering
\includegraphics[width=0.48\linewidth]{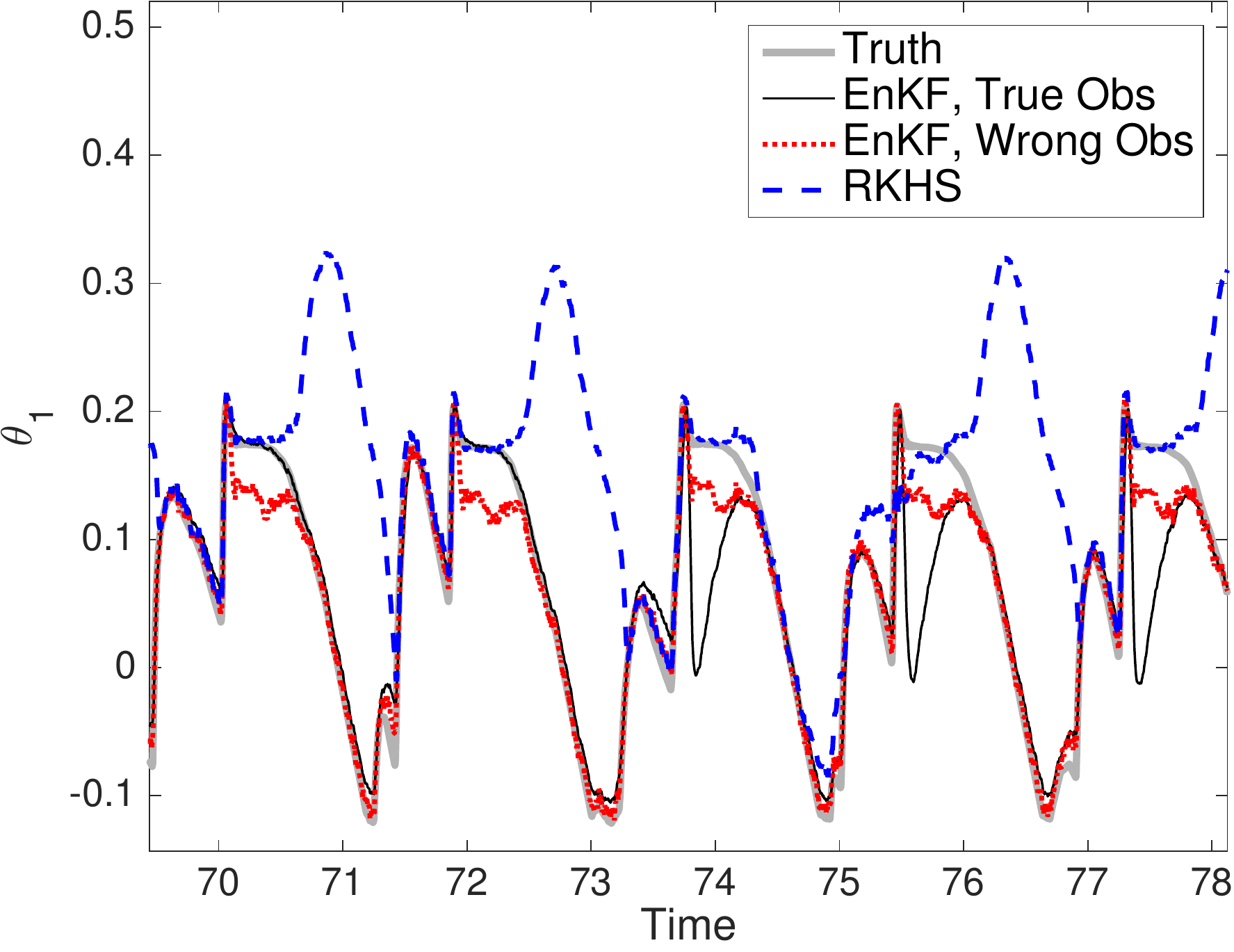}
\includegraphics[width=0.48\linewidth]{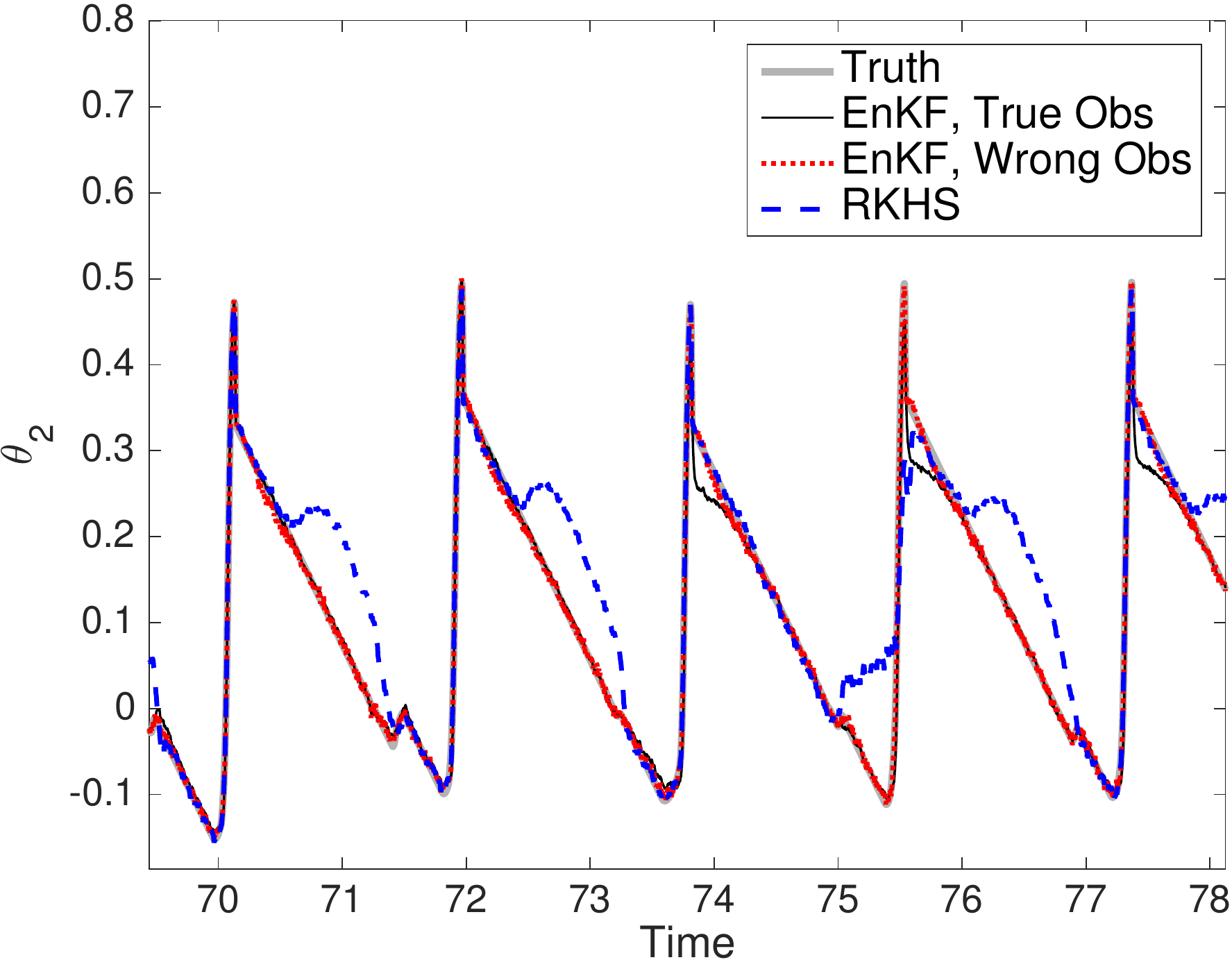}
\includegraphics[width=0.48\linewidth]{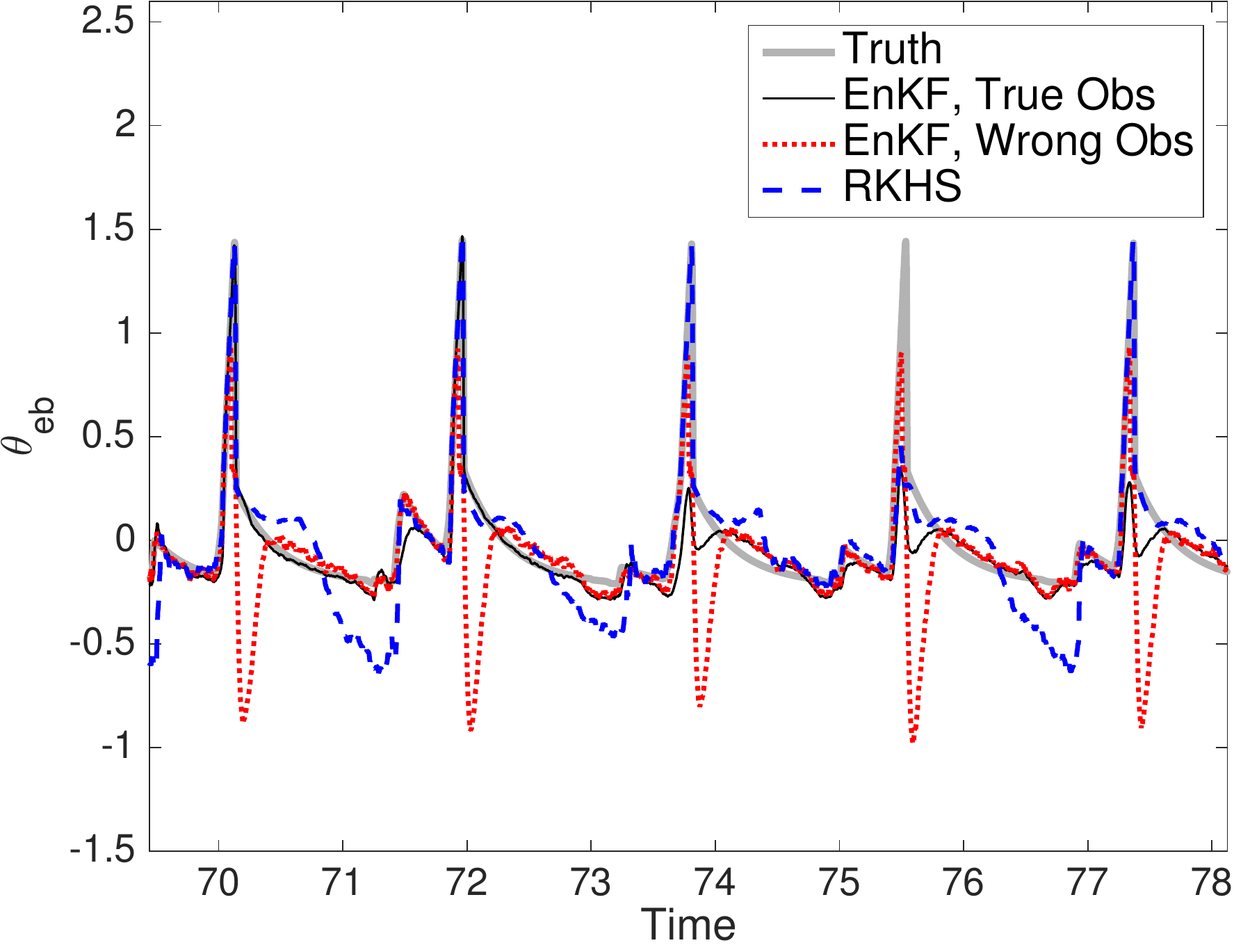}
\includegraphics[width=0.48\linewidth]{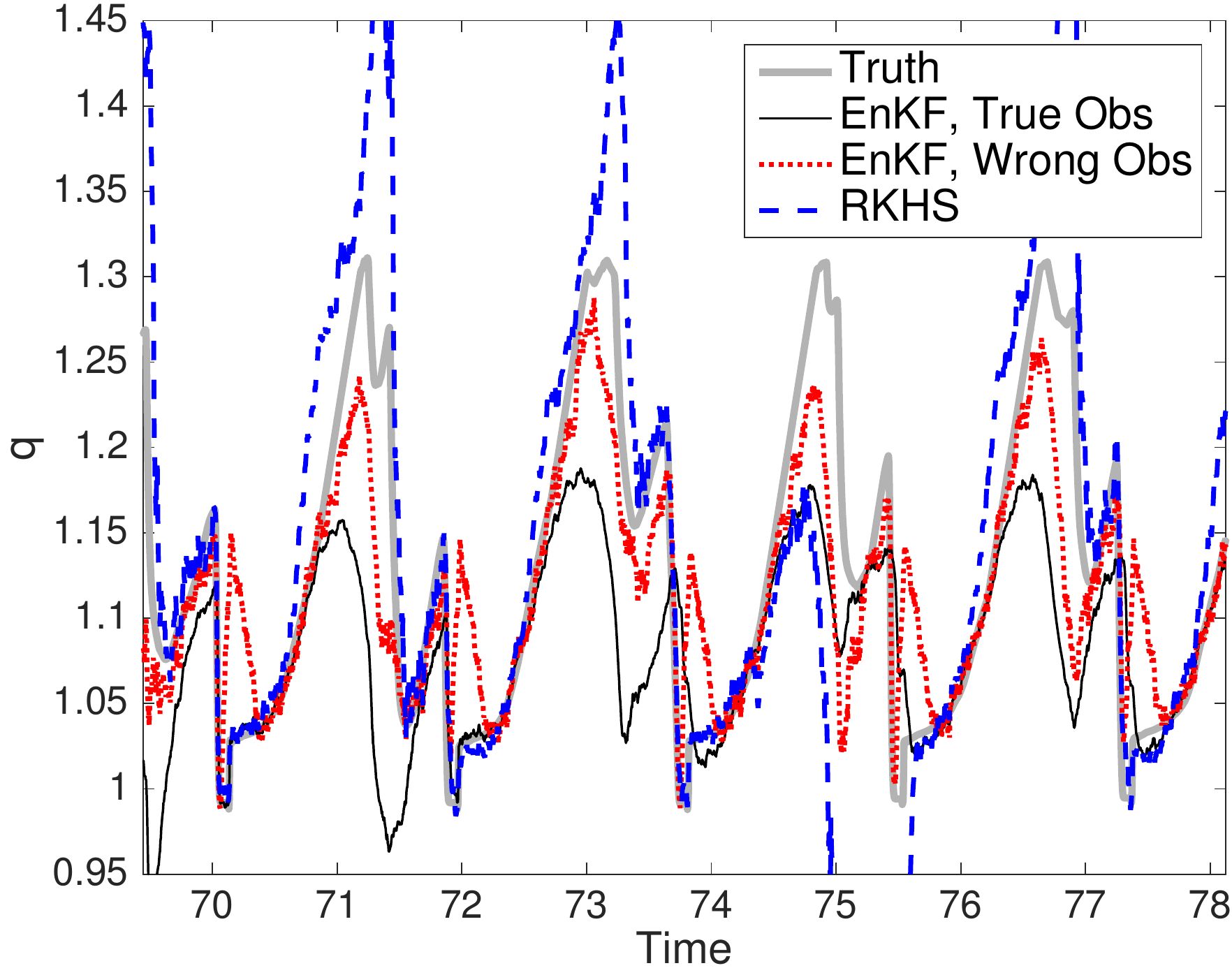}
\includegraphics[width=0.48\linewidth]{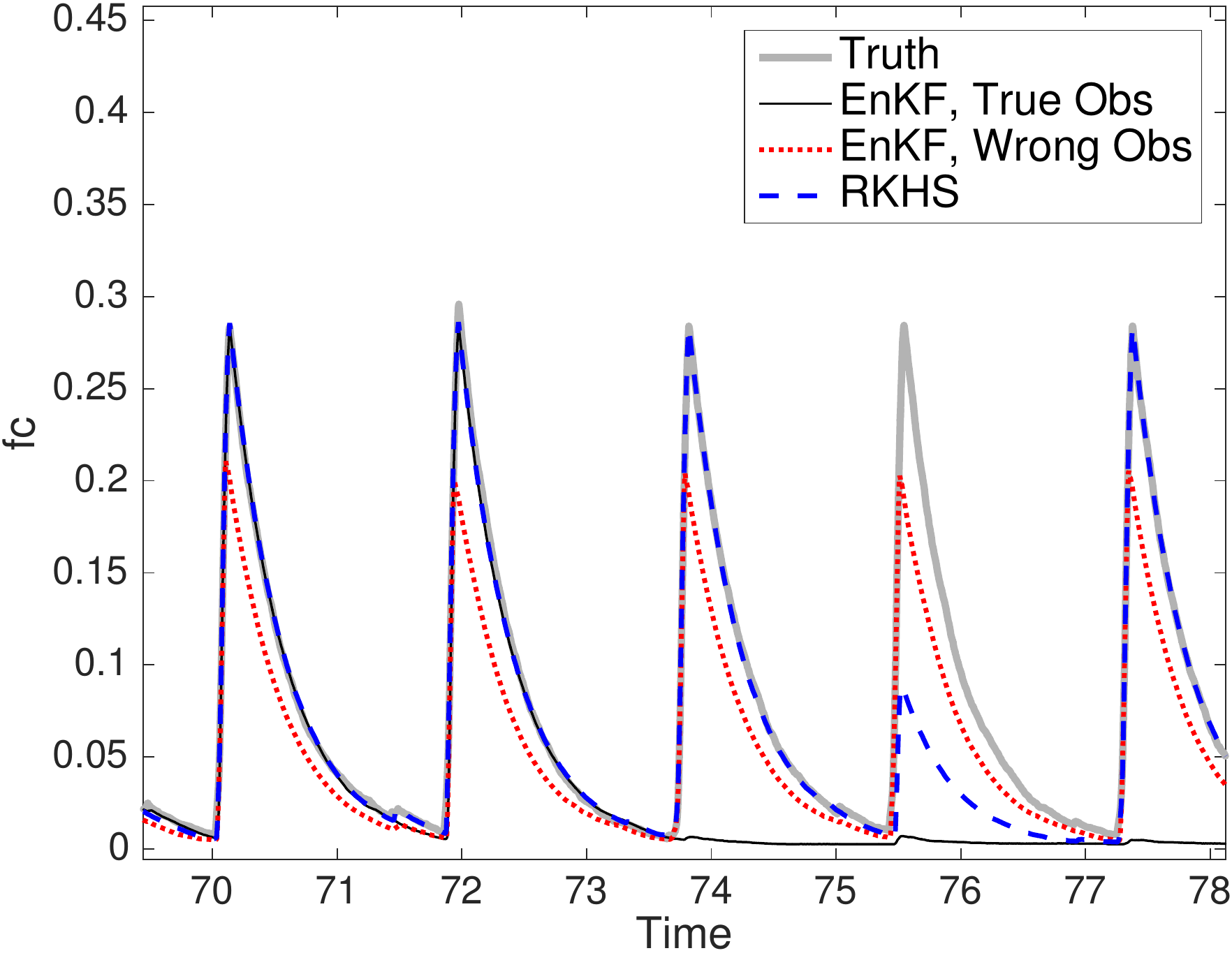}
\includegraphics[width=0.48\linewidth]{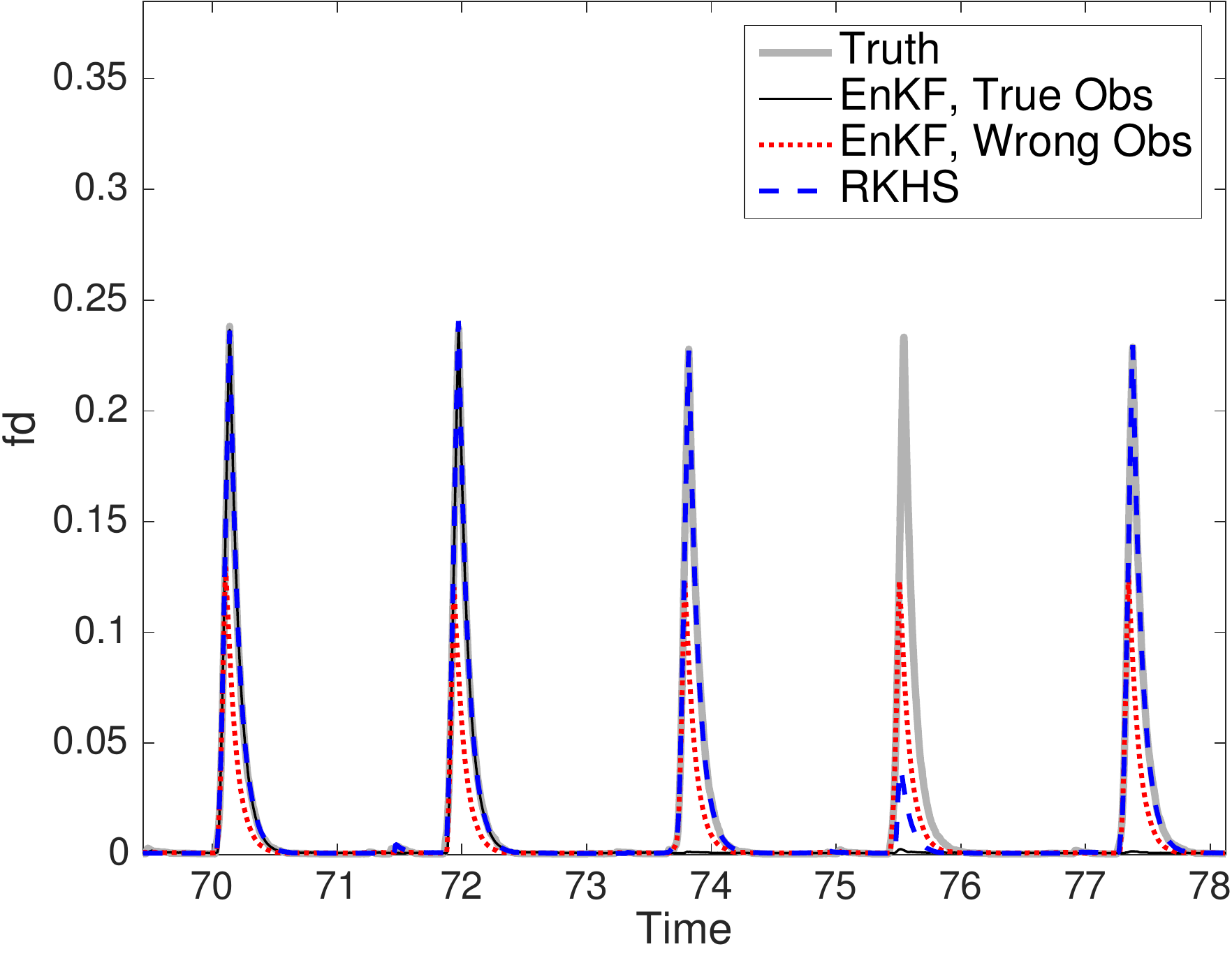}
\includegraphics[width=0.48\linewidth]{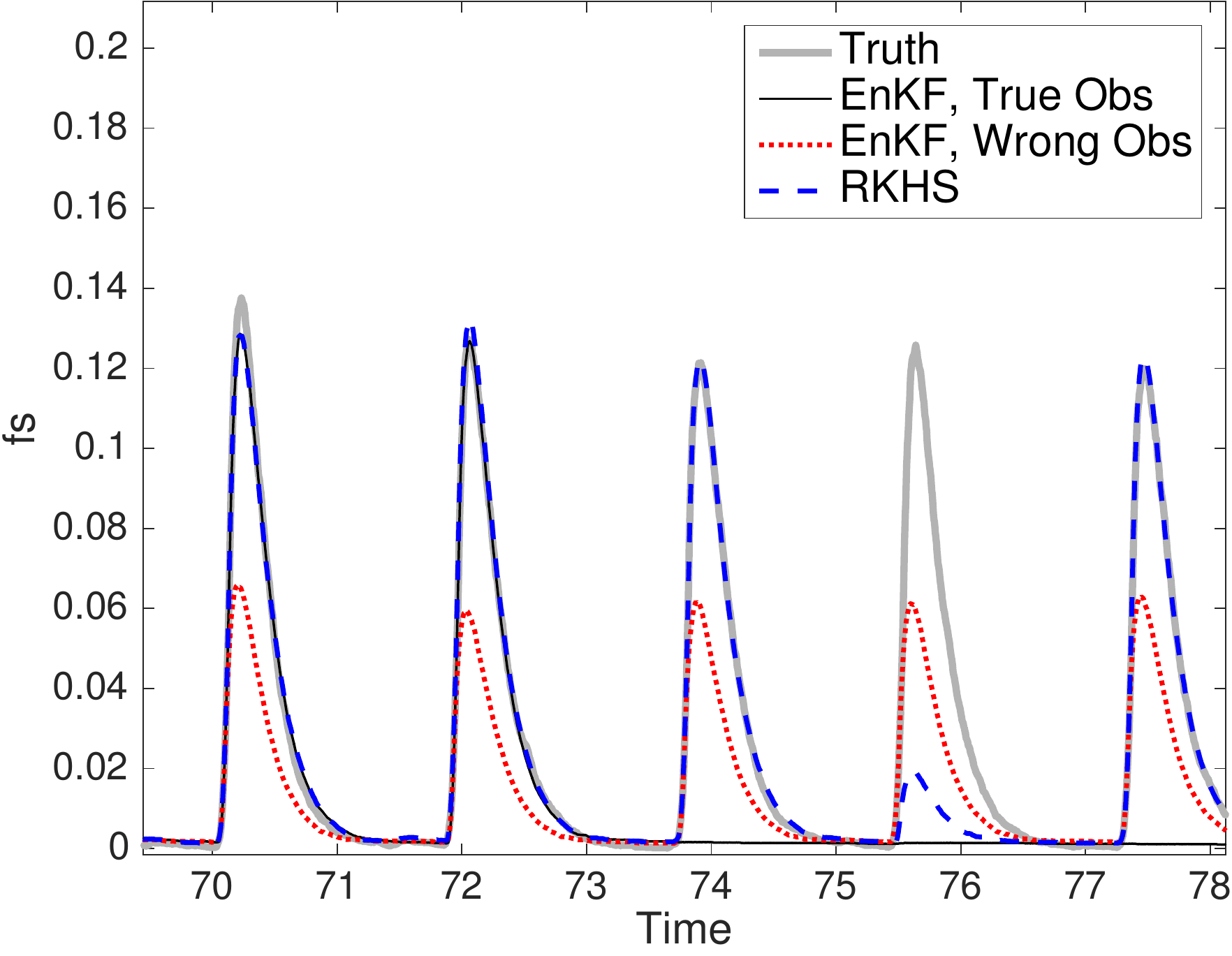}
\caption{\label{cloud2} Time series of filter estimates compared to the true state variables for all 7 model variables in the presence of large measurement error $R^o=0.02 \times \textup{var}(h_\nu(x,f))$ which is 2\% of the observation variance.  Notice that many of the cloud fraction estimates even for the perfect observation model are far from the true state, which indicates the difficulty of filtering these observations when the measurement errors are large.}
\end{figure}

\section{Summary}
We introduced a data-driven observation model error estimator. The estimator is a recursive Bayesian method which uses a nonparametric likelihood function, constructed by representing the kernel embedding of conditional distributions with data-driven basis functions obtained via the diffusion maps algorithm. The method is scalable to high-dimensional problems since it can be used in parallel to correct error in each component of a high dimensional observation. While our main motivation is to solve the cloudy satellite inference problem, the proposed framework can be used in general applications so long as a training data of the states and corresponding observations is available. Also, while the proposed observation model error correction technique is shown in tandem with an ensemble Kalman filter, the fact that we estimate the conditional distribution (and not just the statistics) of the error, $p(b|y)$, implies that this framework can also be used in any filtering method. 

In the examples above we showed that the RKHS correction is able to overcome complex multimodal observation model error.  When there is sufficient observability (eg. short observation time) and small measurement error, the RKHS correction approaches the filtering skill of the EnKF given the perfect observation function.  In the presence of large measurement error or long observation time, the RKHS correction degrades relative to the EnKF with the perfect observation function, however it is still a significant improvement in both skill and stability over simply inflating the EnKF noise parameters, $Q$ and $R$.  The key to the RKHS is using historical data to learn the conditional distribution of the model error.  The RKHS then combines the current filter predicted observation with the actual observation to determine the appropriate correction to the current observation model.  This process depends on choosing an appropriate prior for the current model error, and in particular we presented two natural choices for the covariance of this prior.  Future improvements may be possible by designing better or adaptive priors.

%%%%%%%%%%%%%%%%%
%ACKNOWLEDGMENTS
%%%%%%%%%%%%%%%%%

\acknowledgments
The research of J.H. is partially supported by the Office of Naval Research Grants N00014-16-1-2888, MURI N00014-12-1-0912 and the National Science Foundation grant DMS-1317919.

%%%%%%%%%%%%%%%%%
%REFERENCES
%%%%%%%%%%%%%%%%%

\bibliographystyle{ametsoc2014}
\bibliography{ref}

%%%%%%%%%%%%%%%%%
%APPENDIXES
%%%%%%%%%%%%%%%%%
\appendix[A]
\appendixtitle{Kernel embedding of distributions}

In this Appendix we discuss the theory of kernel embeddings of distributions for constructing likelihood functions. We should point out that the review in this section follows the derivation in \cite{song2009, song2013} except that we adapt their derivation to weighted Hilbert spaces, which will be required in the next section.

Given a symmetric positive definite kernel, $K:\cal M\times\cal M\rightarrow \mathbb{R}$, the Moore-Aronszajn theorem states that there exists a Reproducing Kernel Hilbert Space (RKHS) ${\cal H}=L^2({\cal M},q)$. This is a space of functions $f:\cal{M}\rightarrow \mathbb{R}$ with a \emph{reproducing property}, that is, for each $x\in\mathcal{M}$, there exists $K(x,\cdot)$ such that $f(x) = \langle f, K(x,\cdot) \rangle_{q}, \forall f\in\cal H$. Moreover, $K(x,\cdot)\in\cal{H}$, which implies that $K(x,y) = \langle K(x,\cdot),K(y,\cdot)\rangle_{q}$.  The map from $\cal M$ to $\cal H$ given by $x \mapsto K(x,\cdot)$ is called the feature map.

Let $X$ be a random variable with distribution $P(X)$ defined on a domain $\cal M$. The \emph{kernel embedding of a distribution $P \in \mathcal{H}$} maps $P$ to a function $\mu_X \in \cal H$ given by,
\BEA
\mu_X := \mathbb{E}_X[K(X,\cdot)] = \int_{\cal M} K(x,\cdot) dP(x)\label{mux}
\EEA
which is the expectation of the feature map.  For any $f\in{\cal H}$,
\BEA
\mathbb{E}_X[f(X)] &=&  \int_{\cal M} f(x)dP(x) = \int_{\cal M} \langle f,K(x,\cdot)\rangle_{q} dP(x)\nonumber \\ &=& \big\langle f, \int_{\cal M} K(x,\cdot) dP(x) \big\rangle_q = \langle f, \mu_X\rangle_q.\nonumber%\label{Expf}
\EEA
So $\mu_X$ is the Riesz representative of expectation over $P(X)$.

Let $Y$ be another random variable defined on another space $\cal N$ and another kernel $\tilde K(y,\cdot)$, which 
maps $\cal N$ to a feature space $\tilde{\cal H}=L^2({\cal N},\tilde q)$. 
We assume the following equality,
\BEA
\langle K(x,\cdot)\otimes \tilde{K}(y,\cdot),K(x',\cdot)\otimes \tilde{K}(y',\cdot) \rangle_{q \otimes \tilde q} \\ \nonumber = K(x,x')\otimes \tilde K(y,y').\label{ppty}
\EEA
Then the joint distribution $P(X,Y)$ can be mapped into the product feature space ${\cal H} \otimes \cal \tilde H$ with the following definition,
\BEA
\mathcal{C}_{XY} :=  \int_{{\cal M}\times \cal N} K(x,\cdot)\otimes\tilde{K}(y,\cdot)\,dP(x,y).\label{Cxy}
\EEA
Then, for any functions $f\in{\cal H}, g\in{\cal \tilde H}$, we have
\BEA
&& \hspace{-30pt} \mathbb{E}_{XY}[f(X)g(Y)] = \int_{{\cal M}\times \cal N} f(x) g(y) dP(x,y)\nonumber\\ &=& \int_{{\cal M}\times \cal N} \langle f,K(x,\cdot)\rangle_{q}  \langle g,\tilde K(y,\cdot)\rangle_{\tilde q} dP(x,y) \nonumber \\ &=& \int_{{\cal M}\times \cal N}\langle f\otimes g,K(x,\cdot)\otimes \tilde K(y,\cdot)\rangle_{q\times \tilde q}\, dP(x,y)\nonumber\\
&=& \langle f\otimes g, \mathcal{C}_{XY} \rangle_{q\times \tilde q} := \langle f,\mathcal{C}_{XY}g\tilde{q} \rangle_{q},\label{EXYfg}
\EEA
using the equality in \eqref{ppty} and the definition of $\mathcal{C}_{XY}$ in \eqref{Cxy}. Again, the cross-covariance operator $\mathcal{C}_{XY}$ is a Riesz representative of expectation over $P(X,Y)$.

The kernel embedding of conditional distribution $P(Y|X)$ is defined as,
\BEA
\mu_{Y|x} = \mathbb{E}_{Y|x}[\tilde K(Y,\cdot)] =  \int_{\cal N} \tilde K(y,\cdot) dP(y|x),\label{muygivenx}
\EEA
and one can use the same argument to show that the embedding operator $\mu_{Y|x}$ is the Riesz representer of the conditional expectation, 
\BEA
\mathbb{E}_{Y|x}[g(Y)] = \langle g, \mu_{Y|x}\rangle_{\tilde q}.\label{muyx}
\EEA

The main result from \cite{song2013} that we will exploit is that
\begin{theorem} The kernel embedding of $P(Y|X)$ satisfies,
\BEA
\mu_{Y|x} = q\mathcal{C}_{YX}\mathcal{C}_{XX}^{-1}K(x,\cdot),\label{muygivenx2}
\EEA
where the operators $\mathcal{C}_{XY}$ is defined as in \eqref{Cxy} and $\mathcal{C}_{XX} = \int_{\mathcal{M}} K(x,\cdot) K(x,\cdot) dP(x)$. 
\end{theorem}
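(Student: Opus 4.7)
The plan is to verify the claimed identity by testing it against the Riesz-representation property \eqref{muyx} that uniquely characterizes $\mu_{Y|x}$: it suffices to show that for every $g\in\tilde{\mathcal H}$ one has $\langle g,\;q\,\mathcal{C}_{YX}\mathcal{C}_{XX}^{-1}K(x,\cdot)\rangle_{\tilde q} = \mathbb{E}_{Y|x}[g(Y)]$. The argument amounts to the standard reproducing-kernel derivation of \cite{song2013}, with the density weights of the ambient $L^2$-spaces carefully tracked at each step.

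First I would introduce $h(x) := \mathbb{E}_{Y|X=x}[g(Y)]$ and, following \cite{song2013}, assume $h\in\mathcal H$. The tower property yields $\mathbb{E}_X[f(X)h(X)] = \mathbb{E}_{XY}[f(X)g(Y)]$ for every $f\in\mathcal H$. By \eqref{EXYfg} the right-hand side equals $\langle f,\mathcal{C}_{XY}g\tilde q\rangle_q$, while applying the same bilinear manipulation to $\mathcal{C}_{XX}=\int K(x,\cdot)\otimes K(x,\cdot)\,dP(x)$ rewrites the left-hand side as $\langle f,\mathcal{C}_{XX}(hq)\rangle_q$. Arbitrariness of $f\in\mathcal H$ then gives $\mathcal{C}_{XX}(hq)=\mathcal{C}_{XY}(g\tilde q)$ in $\mathcal H$, hence $hq = \mathcal{C}_{XX}^{-1}\mathcal{C}_{XY}(g\tilde q)$.

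Next I would evaluate $h$ pointwise using the reproducing property $h(x) = \langle h, K(x,\cdot)\rangle_q$, substitute the expression for $h$ just obtained, and transpose the cross-covariance using the adjoint relation $\mathcal{C}_{XY}^* = \mathcal{C}_{YX}$ to move the operator onto $K(x,\cdot)$. Collecting factors leaves $h(x) = \langle g,\,q(x)\,\mathcal{C}_{YX}\mathcal{C}_{XX}^{-1}K(x,\cdot)\rangle_{\tilde q}$, and comparison with \eqref{muyx} reads off $\mu_{Y|x} = q\,\mathcal{C}_{YX}\mathcal{C}_{XX}^{-1}K(x,\cdot)$.

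The main technical obstacle is bookkeeping the density weights $q,\tilde q$ as functions are pushed through the cross-covariance operators between the weighted Hilbert spaces $L^2(\mathcal M,q)$ and $L^2(\mathcal N,\tilde q)$; this is precisely where the prefactor $q$ in \eqref{muygivenx2} originates and is what distinguishes this derivation from the unweighted case in \cite{song2009,song2013}. A subsidiary issue is the regularity assumption $h\in\mathcal H$, which implicitly requires $\mathcal{C}_{XY}(g\tilde q)$ to lie in the range of $\mathcal{C}_{XX}$; in practice this is enforced via Tikhonov regularization and does not affect the formal identity.
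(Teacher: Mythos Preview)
Your proposal is correct and follows essentially the same route as the paper: you establish the operator identity $\mathcal{C}_{XX}h = \mathcal{C}_{XY}g\tilde q$ (up to a weight factor) by testing against arbitrary $f\in\mathcal H$ via the tower property and \eqref{EXYfg}, then invoke the reproducing property $h(x)=\langle h,K(x,\cdot)\rangle_q$ and the adjoint relation $\mathcal{C}_{XY}^\top=\mathcal{C}_{YX}$ to extract $\mu_{Y|x}$ by comparison with \eqref{muyx}. Your explicit flagging of the regularity assumption $h\in\mathcal H$ and the bookkeeping of the density weights $q,\tilde q$ matches the paper's treatment, and if anything is slightly more careful about where the prefactor $q$ in \eqref{muygivenx2} originates.
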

\begin{proof} This result depends on the identity derived in \cite{song2009}, which states that for $g\in{\cal \tilde H}$,
\BEA
\mathcal{C}_{XX}\mathbb{E}_{Y|X}[g(Y)] = \mathcal{C}_{XY}g\tilde q. \label{fukumizu}
\EEA
To see this, let $f\in {\cal H}$ and notice that
\BEA
\langle f, \mathcal{C}_{XX}\mathbb{E}_{Y|X}[g(Y)] \rangle_{q} &=& \mathbb{E}_{X}[f(X)\mathbb{E}_{Y|X}[g(Y)]] \nonumber \\ &=&\mathbb{E}_{XY}[f(X)g(Y)] = \langle f, \mathcal{C}_{XY}g\tilde q\rangle_{q}, \nonumber
\EEA
where we have used  the fact that $\mathcal{C}_{XX}$ is the Riesz representer of expectation with respect to $P(X)$ and also the equality in \eqref{EXYfg}. For each $\mathbb{E}_{Y|X}[g(Y)]\in{\cal H}$, then by the reproducing property of $\cal H$, one can write,
\BEA
\mathbb{E}_{Y|x}[g(Y)] &=& \langle \mathbb{E}_{Y|X}[g(Y)],K(x,\cdot)  \rangle_{q} 
\nonumber \\ &=& \langle \mathcal{C}_{XX}^{-1}\mathcal{C}_{XY}g\tilde{q}q ,K(x,\cdot)  \rangle\nonumber\\ &=&  \langle g ,q\mathcal{C}_{YX}\mathcal{C}_{XX}^{-1}K(x,\cdot) \rangle_{\tilde q}\label{Eyx}
\EEA
where we used the identity in \eqref{fukumizu} and using the fact that $\mathcal{C}_{XX}$ is symmetric and $\mathcal{C}_{YX} = \mathcal{C}_{XY}^\top$. Comparing \eqref{Eyx} with the \eqref{muyx}, we obtain \eqref{muygivenx2}.
\end{proof}

\appendix[B]
\appendixtitle{Proof of equations~\eqref{muyb}-\eqref{CBB}.}

In this Appendix, we will prove equations~\eqref{muyb}-\eqref{CBB}. For our discussion below, we define two random variables. The first one, $Y\in \mathbb{R}$ with distribution $P(Y)$  and kernel $\tilde{K}$ such that $L^2(\mathbb{R},\tilde{q})$ is the RKHS with orthonormal basis functions $\phi_k(y)$ that can be estimated from the data $y_i$ with sampling density $\tilde{q}(y)$. The second one, $B\in \mathbb{R}$ with distribution $P(N)$  and kernel $K$ such that $L^2(\mathbb{R},q)$ is the RKHS with orthonormal basis functions $\varphi_j(x)$ that can be estimated from the data $b_i$ with sampling density $q(b)$.

Let assume the representation in \eqref{pycondx}. Taking inner-product of \eqref{pycondx} with $\phi_k$ and imposing the orthogonality condition, we obtain,
\BEA
\mu_{Y|b,k} = \langle p(\cdot|b),\phi_k\rangle = \mathbb{E}_{Y|b}[\phi_k].\nonumber
\EEA

Applying \eqref{muyx} and the equality in \eqref{muygivenx} from Theorem~1 with the random variable $X$ replaced by $B$, we have
 \BEA
 \mu_{Y|b,k} &=& \langle \mu_{Y|b}, \phi_k\rangle_{\tilde q} \nonumber \\
 &=&\langle q\mathcal{C}_{YB}\mathcal{C}_{BB}^{-1} K(b,\cdot), \phi_k\rangle_{\tilde q} \nonumber\\
 &=& \langle q\mathcal{C}_{YB}\mathcal{C}_{BB}^{-1}\sum_j  \varphi_j(b)  \varphi_j,  \phi_k\rangle_{\tilde q}.\nonumber \\
 &=& \sum_j  \varphi_j(b) \langle q \mathcal{C}_{YB}\mathcal{C}_{BB}^{-1}  \varphi_j,  \phi_k\rangle_{\tilde q}\nonumber \\ 
 &=& \sum_j  \varphi_j(b)  \langle \mathcal{C}_{YB}\mathcal{C}_{BB}^{-1},  \varphi_j \otimes \phi_k\rangle_{q\otimes \tilde q}. \label{mu_ycondxi}
\EEA
In the second equality above, we used the reproducing property, $\varphi_j(b) = \langle K(b,\cdot), \varphi_j\rangle_q$, such that $K(b,\cdot) = \sum_j  \varphi_j(b)  \varphi_j(\cdot)$. 

Define the projection of the operators $\mathcal{C}_{YB}$ and $\mathcal{C}_{BB}$ on the space spanned by the basis functions $\phi_k(y), \varphi_j(b)$, respectively, with matrices $C_{YB}$ and $C_{BB}$ whose components are,
\BEA
\big[C_{YB}\big]_{jk} &:=& \langle \mathcal{C}_{YB}, \phi_j\otimes\varphi_k\rangle_{\tilde{q}\otimes q} = \mathbb{E}_{YB}[\phi_j \varphi_k],\nonumber \\
\big[C_{BB}\big]_{jk} &:=& \langle \mathcal{C}_{BB}, \varphi_k\otimes\varphi_k\rangle_{q} =  \mathbb{E}_{B}[\varphi_j \varphi_k],\nonumber
\EEA
where the second equality in these two equations follows from \eqref{EXYfg}. Numerically, we can estimate these quantities using Monte-Carlo averages as shown in \eqref{CYB} and \eqref{CBB}, respectively. To complete the proof of equality \eqref{muyb}, we just show that,
\BEA
[C_{YB}C_{BB}^{-1}]_{kj} &:=& \sum_l [C_{YB}]_{kl} [C_{BB}^{-1}]_{lj} \nonumber \\&=& \sum_l \langle \mathcal{C}_{YB},\phi_k\otimes\varphi_l \rangle_{\tilde q\otimes q}\langle \mathcal{C}_{BB}^{-1},\varphi_l \varphi_j \rangle_{q} \nonumber \\
&=&   \left\langle \mathcal{C}_{YB},\phi_k\otimes \left( \sum_l \langle \mathcal{C}_{BB}^{-1},\varphi_l \varphi_j \rangle_{q}\varphi_l \right) \right\rangle_{\tilde q\otimes q} \nonumber \\
&=&   \left\langle \mathcal{C}_{YB},\phi_k\otimes \mathcal{C}_{BB}^{-1}\varphi_j \right\rangle_{\tilde q\otimes q} \nonumber \\
&=&   \left\langle \mathcal{C}_{YB}\mathcal{C}_{BB}^{-1},\phi_k\otimes \varphi_j \right\rangle_{\tilde q\otimes q}\label{matmul}.
\EEA
Together with \eqref{mu_ycondxi} and the Monte-Carlo approximations on $C_{YB}$ and $C_{BB}$, we obtain \eqref{muyb}-\eqref{CBB}. In the derivation above, we used infinite number of basis functions. Numerically, we use finite number of basis functions as shown in the main text.

\appendix[C] 
\appendixtitle{A simplified radiative transfer model}

In this Appendix, we discuss the simplified radiative transfer model used in Section~5. Given the physical variables in the stochastic cloud model in \cite{kbm:10}, the first and second baroclinic potential temperatures, $\theta_1$ and $\theta_2$, respectively, the equivalent boundary layer potential temperature, $\theta_{eb}$, and the vertically average water vapor content, $q$, we consider an idealistic radiative transfer model for the brightness temperature at wavenumber $\nu$ at clear-sky condition, following the standard formulation in \cite{liou:02}:
\BEA
h_\nu(x) = \theta_{eb} T_\nu(0) + \int_0^\infty T(z) \frac{\partial T_\nu}{\partial z} (z)\,dz, \label{TB}
\EEA
where $x=\{\theta_1,\theta_2,\theta_{eb},q\}$ and the temperature at height $z$ is defined by as in \eqref{temperature}. The first term on the right hand side of \eqref{TB} denotes the contribution from the surface and the second term denotes the contribution from the whole column of the atmosphere which is a weighted average of the temperature with weighting function,  $\frac{\partial T_\nu}{\partial z}$.

In \eqref{TB}, we define the transmission between heights $z$ to $\infty$, assuming the height of the satellite instrument is much larger than the troposphere, as follows,
\BEA
T_\nu(z) &=& \exp(- \int_z^\infty \alpha_\nu(s)\,ds).\nonumber
\EEA
Here $\alpha_\nu$ denotes the absorption rate which we assumed to decrease exponentially as a function of height,
\BEA
\alpha_\nu(z) = \alpha_\nu^*q\exp(-\frac{z}{H}),\nonumber
\EEA
where $\alpha_\nu^*$ denotes a reference absorption rate that is to be determined.  In our simulation, we set $H=3$ km.
With this assumption, we have,
\BEA
T_\nu(z) &=& \exp(- \alpha_\nu(z)H)\nonumber
\EEA
and $T_\nu(0) = \exp(- \alpha_\nu^*qH)$.
Also, the weighting functions become,
\BEA
\frac{\partial T_\nu}{\partial z} (z) &=& \alpha_\nu(z) T_\nu(z),\nonumber
\EEA

We determine the wavenumber $\nu$ by specifying $\alpha_\nu^*$ corresponding to the height of which the weighting function is maximum. That is, setting $\frac{\partial T_\nu^2}{\partial z^2} (z) = 0$, we have
\BEA
\alpha_\nu^* = \frac{\exp (\frac{z_{max}}{H}) }{q H},\nonumber
\EEA
where $q$ needs to be fixed to a specific reference value. To increase the sensitivity of the weighting function to $q \in[a,b]$ where $a,b$ denote the lower and upper bounds of $q$ which can be obtained by the climatological data, we rescale $q$ to fluctuate in between $[1,2]$ by defining
\BEA
\tilde{q} = \frac{q-a}{b-a}+1.\nonumber
\EEA
and define the absorption rate to be,
\BEA
\alpha_\nu(z) = \alpha_\nu^*\tilde{q}\exp(-\frac{z}{H}),\quad\quad
\alpha_\nu^* = \frac{\exp (\frac{z_{max}}{H}) }{H}.\nonumber
\EEA
This means that when $q = a$, the corresponding $\alpha_\nu(0)$ will produce a weighting function, $\frac{\partial T_\nu}{\partial z} (z)$ that has a maximum weight at height $z_{max}$. In Fig.~\ref{weight}, we show the weighting functions (black solid) with the reference humidity $q=a$ associated with $z_{max}=2, 5, 8, 10$ km (black dashes). We also include the weighting functions corresponding to humidity value $q=b$ (red solid), which show that depending on the value of $q$, the shape of weighting functions will vary between these two weights.

For a single type of cloud, the basic equation for the RTM of the cloudy sky measurement with cloud fraction $c$ and cloud top height $z_t$ is given as follows,
\BEA
h_\nu(x,c) &=& (1-c)\Big( \theta_{eb} T_\nu(0) + \int_0^{z_t} T(z) \frac{\partial T_\nu}{\partial z} (z)\Big)\,dz  \nonumber \\ &&+ cT(z_t) T_\nu(z_t) + \int_{z_t}^\infty T(z) \frac{\partial T_\nu}{\partial z} (z)\,dz\label{TB1cloud}
\EEA
So, the first term denotes the contribution below the cloud, the second term denotes the contribution from the cloud top height, and the last term denotes the contribution from the clear sky above the cloud top height.

The stochastic component of the cloud model in \cite{kbm:10} is a birth-death process that accounts for the evolution of the cloud fractions, $f=\{f_c, f_d, f_s\}$ of three cloud types, cumulus, deep, and stratiform clouds, respectively. Assuming that the deep and stratiform cloud top heights are similar, namely $z_d=12$km, and the cumulus cloud top height is $z_c=3$km, we consider a two-cloud type formulation as follows, 
\BEA
h_\nu(x,f) &=& (1-f_d-f_s)\Big[ \theta_{eb} T_\nu(0) + \int_0^{z_d} T(z) \frac{\partial T_\nu}{\partial z} (z)\,dz\Big]  \nonumber \\ & &+ (f_d+f_s)T(z_d) T_\nu(z_d) + \int_{z_d}^\infty T(z) \frac{\partial T_\nu}{\partial z} (z)\,dz \nonumber\\
&=& (1-f_d-f_s)\Big[(1-f_c)\big( \theta_{eb} T_\nu(0) \nonumber \\ &&+ \int_0^{z_c} T(z) \frac{\partial T_\nu}{\partial z} (z)\,dz \big) \nonumber\\&&+ f_c T(z_c) T_\nu(z_c) + \int_{z_c}^{z_d} T(z) \frac{\partial T_\nu}{\partial z} (z)\,dz \Big]  \nonumber \\ & &+ (f_d+f_s)T(z_d) T_\nu(z_d) + \int_{z_d}^\infty T(z) \frac{\partial T_\nu}{\partial z} (z)\,dz\nonumber
\EEA
Here, the first two rows denotes the contribution below the deep and stratiform clouds, the first term in the last row denotes the contribution from the deep cloud top height, and the last term denotes the contribution from the atmospheric column above the cloud. One can check that if $f_c=0$ and/or $f_d+f_s=0$, then this formulation is consistent with \eqref{TB} and \eqref{TB1cloud}.

In our numerical simulations, we simulate the observations as follows (ignoring time index-$i)$,
\BEA
y_\nu = h_\nu(x,f) + \eta_\nu, \quad \eta_\nu\sim\mathcal{N}(0,R^o),\nonumber
\EEA
for 16 wavenumbers $\nu$ chosen such that the weighting functions, $\frac{\partial T_\nu}{\partial z}$, are maximum at heights $1, 2, \ldots, 16$ km. But assuming that we don't know the cloud top heights, $z_c, z_d$, as well as the cloud fractions, $f=\{f_c,f_d,f_s\}$, we will apply the filtering with the clear-sky observation model, $\tilde{h}_\nu(x) = h_\nu(x,0)$, 
\BEA
y_\nu \approx \tilde{h}_\nu(x) + b_\nu + \eta_\nu, \quad \eta_\nu\sim\mathcal{N}(0,R),\nonumber
\EEA
where $b_\nu$ denotes error for at frequency-$\nu$. In our implementation, we will either fix the parameter $R$ or estimate it adaptively using the method from \cite{bs:13}.

%%%%%%%%%%%%%%%%%
% TABLES
%%%%%%%%%%%%%%%%%

%%%%%%%%%%%%%%%%%
% FIGURES
%%%%%%%%%%%%%%%%%

\end{document}